\def\BibTeX{{\rm B\kern-.05em{\sc i\kern-.025em b}\kern-.08em
		T\kern-.1667em\lower.7ex\hbox{E}\kern-.125emX}}
\newtheorem{lem}{Lemma}
\newtheorem{prop}{Proposition}
\newcommand{\beq}{\begin{equation}}
	\newcommand{\eeq}{\end{equation}}
\newcommand{\beqa}{\begin{eqnarray}}
	\newcommand{\eeqa}{\end{eqnarray}}
\begin{document}

\title{Joint Collaboration and Compression Design for Distributed Sequential Estimation in a Wireless Sensor Network}
\author{Xiancheng~Cheng,
        Prashant Khanduri,
        Boxiao Chen,
        Pramod K.Varshney, \IEEEmembership{Life~Fellow,~IEEE}
}


\maketitle

\begin{abstract}
In this work, we propose a joint collaboration-compression framework for sequential estimation of a random vector parameter in a resource constrained wireless sensor network (WSN). Specifically, we propose a framework where the local sensors first collaborate (via a collaboration matrix) with each other. Then a subset of sensors selected to communicate with the FC linearly compress their observations before transmission. 
We design near-optimal collaboration and linear compression strategies under power constraints via alternating minimization of the sequential minimum mean square error. 
We show that the objective function for collaboration design can be non-convex depending on the network topology. 
We reformulate and solve the collaboration design problem using quadratically constrained quadratic program (QCQP). Moreover, the compression design problem is also formulated as a QCQP. We propose two versions of compression design, one centralized where the compression strategies are derived at the FC and the other decentralized, where the local sensors compute their individual compression matrices independently.
It is noted that the design of decentralized compression strategy is a non-convex problem. We obtain a near-optimal solution by using the bisection method.
In contrast to the one-shot estimator, our proposed algorithm is capable of handling dynamic system parameters such as channel gains and energy constraints. Importantly, we show that the proposed methods can also be used for estimating time-varying random vector parameters. Finally, numerical results are provided to demonstrate the effectiveness of the proposed framework. 
\end{abstract}

\begin{IEEEkeywords}
Wireless sensor networks, distributed estimation,  collaboration-compression framework, energy allocation, Sequential estimation, Semidefinite programming, non-convex QCQP.
\end{IEEEkeywords}

\IEEEpeerreviewmaketitle

\section{Introduction}
\label{Sec: Introduction}
\IEEEPARstart{W}{ireless} sensor networks (WSNs) are widely used for inference in a range of applications including environmental monitoring, military surveillance and health \cite{yick2008wireless, khalfallah2013new, hefeeda2007wireless, liu2009sensor, cheng2012survey}. In a WSN, a set of distributed sensors referred to as local sensors collaborate to infer a phenomenon of interest with the help of a fusion center (FC). Due to their flexibility and fault tolerance, WSNs are generally used for target detection \cite{ramson2017applications, ray2008distributed, zhang2019distributed}, target tracking \cite{cao2016sensor, shen2013sensor}, and distributed estimation \cite{akhtar2017distributed,xiao2008linear} problems. In this work, we consider sequential estimation of a random parameter vector via a resource constrained WSN. 

The sensors deployed in a WSN are usually power constrained and computationally limited devices. The local sensors obtain observations about a phenomenon of interest and forward their observations to the FC after some local processing. Moreover, the channels over which the sensors communicate with each other and with the FC are noisy and bandwidth limited. Therefore, it is necessary for the local sensors to deploy their resources in a carefully designed manner so as to meet the stringent power and bandwidth constraints. 
To this end, in this work we consider a collaboration-compression framework where the sensors are allowed to first collaborate with each other and then only a selected subset of sensors communicates with the FC. 
The communication strategies must also be designed so as to satisfy the power and bandwidth constraints of the WSN. 


Inter-sensor communication referred to as Sensor Collaboration first proposed in \cite{kar2013linear}, and then used in different frameworks \cite{liu2015sparsity, liu2014optimal, liu2016optimized, zhang2018optimal,Khanduri2016_SPL} is used to reduce the burden of communication between the local sensors and the FC. In sensor collaboration, all the sensors share their observations with the other sensor nodes defined by the collaboration matrix, after which a subset of sensors is selected for communicating their processed data to the FC. Usually, the communication cost between the local sensors and the FC is much higher compared to the sensor to sensor communication. 
Therefore, collaboration among sensors reduces the overall communication cost in a WSN. In addition, collaboration can also smooth out the observation noise, thereby enhancing the quality of data sent to the FC and improving overall inference performance. This work focuses on the design of linear collaboration strategies among sensor nodes. 

In addition to linear spatial collaboration, to reduce the communication costs further, the local sensors compress their observations before transmitting them to the FC. 
Compression strategies are designed to minimize the amount of data being transmitted to the FC.
This compression can be achieved via quantization  \cite{li2007distributed, ozdemir2008channel, fang2008distributed, collaboration_quantization}, where only symbols from a finite set are transmitted to the FC. Another, popular way of achieving compression is via linear precoding where the dimensions of the observations are reduced via a compression matrix before transmission to the FC \cite{xiao2008linear,akhtar2017distributed,Fang_Li_TSP_2012, Fang_Li_TAES_2014, Bianchi_TSP_2011, Khanduri2019_TSP, Khanduri2019_SPAWC,behbahani2012linear, shirazinia2016massive, fang2013joint}. 
In this work, we focus on the design of such linear compression strategies for the sequential estimation problem. 
In summary, the design of both collaboration and linear compression strategies is considered for distributed vector parameter estimation by minimizing the sequential mean square error. To the best of our knowledge, this is the first work that designs collaboration and compression strategies jointly in a resource constrained WSN to estimate the random vector parameter in an online fashion.



\subsubsection*{Related Literature}
Spatial collaboration was initially proposed for estimating a random scalar in \cite{kar2013linear}. It is shown that a sparse network can achieve performance identical to that of a fully connected network. In \cite{liu2014optimal}, the problem of distributed estimation with sensor collaboration is studied where the optimal sparse collaboration topology subject to information and energy constraints is designed. The sensor collaboration problem in \cite{liu2014optimal} is formulated as a sparsity-aware optimization problem by establishing a correspondence between the collaboration topology and the sparsity structure of the collaboration matrix.
Further, in \cite{liu2015sparsity} the problem of sensor selection is considered for a distributed estimation system with sensor collaboration. Specifically, optimal sensor collaboration and selection schemes are jointly designed through entry- and group-level sparsity of the collaboration matrix. The work empirically showed the trade-off between sensor collaboration and sensor selection.
Then, motivated by the monitoring of temporally correlated parameters such as daily temperature, precipitation, soil moisture and seismic activities \cite{vuran2004spatio, vuran2006spatio}, the work in \cite{liu2016optimized} considered the problem of sensor collaboration for the estimation of time-varying parameters. The optimal sensor collaboration strategy is designed based on the prior knowledge about parameter correlations. Recently, the work in \cite{zhang2018optimal} studied the tracking of a dynamic parameter which follows a first-order Gauss-Markov process with an energy-constrained sensor network. The optimal sensor collaboration strategy is designed in the presence of noisy sensor to sensor communication channels. In all the above mentioned works, the parameter of interest is assumed to be a scalar. In contrast this work considers the estimation of a random vector ("static" as well as "dynamic") parameter with noisy collaboration among local sensors.

Besides sensor collaboration, a number of works also focus on distributed estimation algorithms for reducing the communication cost for resource constrained WSNs via linear compression \cite{xiao2008linear, behbahani2012linear, liu2017joint, shirazinia2016massive, fang2013joint, kar2014decentralized}. 
In \cite{xiao2008linear}, the problem of distributed estimation of an unknown vector signal in resource constrained WSNs via coherent multiple access channels (MAC) is studied. 
The optimal encoder is designed under the criterion of
minimum mean square error (MMSE) where the observation and the channel fading matrices are both assumed to be fixed. A similar problem is discussed in \cite{behbahani2012linear}, in the presence of noisy FC.
Further, the FC equipped with a massive multiple-input multiple-output antenna system is considered. The amplification factor at each sensor node is optimized under the criterion of minimizing the total power consumption. In \cite{fang2013joint}, the distributed estimation for correlated sources is analyzed where the correlated data from multiple sensors are transmitted to the destination via orthogonal channels.
Based on the criterion of maximizing the mutual information between the sources and the received signals at the FC, the linear precoders at the sensors are jointly designed with the knowledge of the instantaneous channel state information (CSI). The aforementioned methods mainly consider the design of one-shot estimators. And the CSI and the observation matrix are usually assumed to be known a \textit{priori} as it is difficult to handle the time-varying scenarios even though they naturally occur in the WSNs. 
Recently in \cite{akhtar2017distributed}, the problem of sequential estimation in a dynamic setting is investigated. A fast block coordinate descent based precoder is designed under the criterion of sequential linear minimum mean square error (LMMSE).
In this work, we focus on designing not only such compression strategies but also focus on sensor collaboration. As pointed out earlier, this is the first work to consider the design of both collaboration and compression jointly in a WSN to estimate the random vector parameter in an online fashion.

Specifically, prior to compression, the sensors first collaborate with each other to compute the data to be sent to the FC. Then, a subset of sensors are selected to transmit their observations to the FC after compressing them into a low dimensional subspace. 
The goal here is to design the collaboration and compression strategies for a dynamic system, i.e., where the system parameters such as the observation matrix, channel gain and power constraints at each sensor can be time-varying. In practice, the resources at each sensor are limited, and the goal is to find the balance between collaboration and compression, thereby improving the performance of the system which makes it important to jointly design the collaboration and compression strategies.

To summarize, we consider the problem of sequential distributed parameter vector estimation in WSNs. The key contributions of the work are listed as follows.
\begin{enumerate}
    \item We construct a novel collaboration-compression framework for distributed estimation over resource constrained WSNs. Collaboration and compression are jointly carried out for reducing communication costs while providing excellent estimation performance.
    \item We develop a recursive linear minimum mean square error (R-LMMSE) estimator under the proposed framework for sequentially estimating a random vector with known mean and variance in a dynamic setting. We design collaboration and compression strategies to minimize the R-LMMSE. We also extend the proposed framework for online estimation of a time-varying random vector parameter.
    \item We propose online algorithms for jointly designing the optimal collaboration and compression strategies in both centralized and decentralized compression settings.
    Also, we show the convergence of the proposed estimator.
\end{enumerate}
The rest of this paper is organized as follows. 
In Section \ref{Sec: Problem statement}, we introduce the proposed collaboration-compression framework. The parameter estimation problem is formulated via R-LMMSE minimization. 
In Section \ref{Sec: Optimal collaboration and compression matrix design}, we jointly design collaboration and compression strategies. Both centralized and decentralized algorithms are presented in this section. For comparison with the proposed algorithms, we also provide a benchmark algorithm which assumes that all the observations are available at the FC. 
 In Section \ref{Sec: Numerical results}, we demonstrate the effectiveness of the proposed framework and algorithms through numerical experiments and compare them with the benchmark algorithm presented in Section \ref{Sec: Optimal collaboration and compression matrix design}. In Section \ref{Sec: Time-varying parameters estimation}, we study the problem of tracking time-varying random vector parameters under the proposed framework. Finally, we conclude the work in Section \ref{Sec: Conclusion}.

\section{Problem statement}
\label{Sec: Problem statement}
In this section, we describe the problem of parameter estimation based on the collaboration-compression scheme. In the proposed framework, instead of transmitting all the observations from individual sensor nodes to the FC directly, each sensor node first performs spatial collaboration via a coherent MAC \cite{zhang2018optimal,kar2013linear} after observing the parameter of interest through a linear measurement model.
Then the measurements after collaboration are linearly compressed \cite{akhtar2017distributed} and transmitted to the FC through a MAC. 
Due to the fact that each sensor is power constrained, a fraction of the energy is devoted to collaboration while the rest of it is used for communication with the FC, which leads to the optimal energy allocation problem among them. 
Our novel framework of joint collaboration-compression is depicted in Fig. \ref{fig_sim}. Also, some notations for the system model are provided in Table. \ref{table:1}.

\begin{table}[h!]
\centering
\caption{Notation for System Model}
\begin{tabular}{|c| c| c |} 
\hline
Parameters &    Symbol&  Space \\  \hline
$\mathbf{x}$   & Unknown parameter  &  $\mathbb{R}^{P\times 1}$  \\ \hline
$\mathbf{H}(k)$      & Observation matrix & $\mathbb{R}^{L\times P}$ \\ \hline
$\mathbf{A}$      &Network topology matrix & $\mathbb{R}^{M\times N}$ \\ \hline
$\mathbf{W}(k)$      & Collaboration matrix & $\mathbb{R}^{M\times N}$ \\ \hline
$\mathbf{F}(k)$      & Compression matrix & $\mathbb{R}^{M\times ML}$ \\ \hline
$\mathbf{G}(k)$      & Sensor-FC channel gain matrix & $\mathbb{R}^{S\times M}$ \\ \hline
\end{tabular}
\label{table:1}
\end{table}

\subsection{System model}
We consider the estimation of a random vector $\mathbf{x}\in \mathbb{R}^{P\times 1}$ through a wireless sensor network (WSN), 
whose mean $\mathbb{E}\{\mathbf{x}\} = \mathbf{x}_0$ and covariance $\mathbf{R}_{{x}} = \mathbb{E}\{\mathbf{x} {\mathbf{x}}^T\}$ are known. At each time instant $k$, the parameter of interest $\mathbf{x}$ is observed by $N$ distributed sensors through a linear measurement matrix. The observations obtained at the $i$th sensor are modeled as                                     
\begin{equation}
     \mathbf{y}_i(k) = \mathbf{H}_i(k)\mathbf{x}+\mathbf{v}_i(k),\ i\in [1,N]
\end{equation}
where $\mathbf{H}_i(k)\in \mathbb{R}^{L\times P}$ represents the linear observation matrix and $\mathbf{v}_i(k)\in \mathbb{R}^{L\times 1}$ is independent identically distributed (i.i.d) additive Gaussian noise with zero mean and covariance $\textbf{R}_{\textbf{v}_i} = \mathbb{E}\{\textbf{v}_i(k)\textbf{v}_i^T(k)\}$ at the $i$th sensor.


\begin{figure}[!t]
\centering
\includegraphics[width=3.3in, height=1.3in]{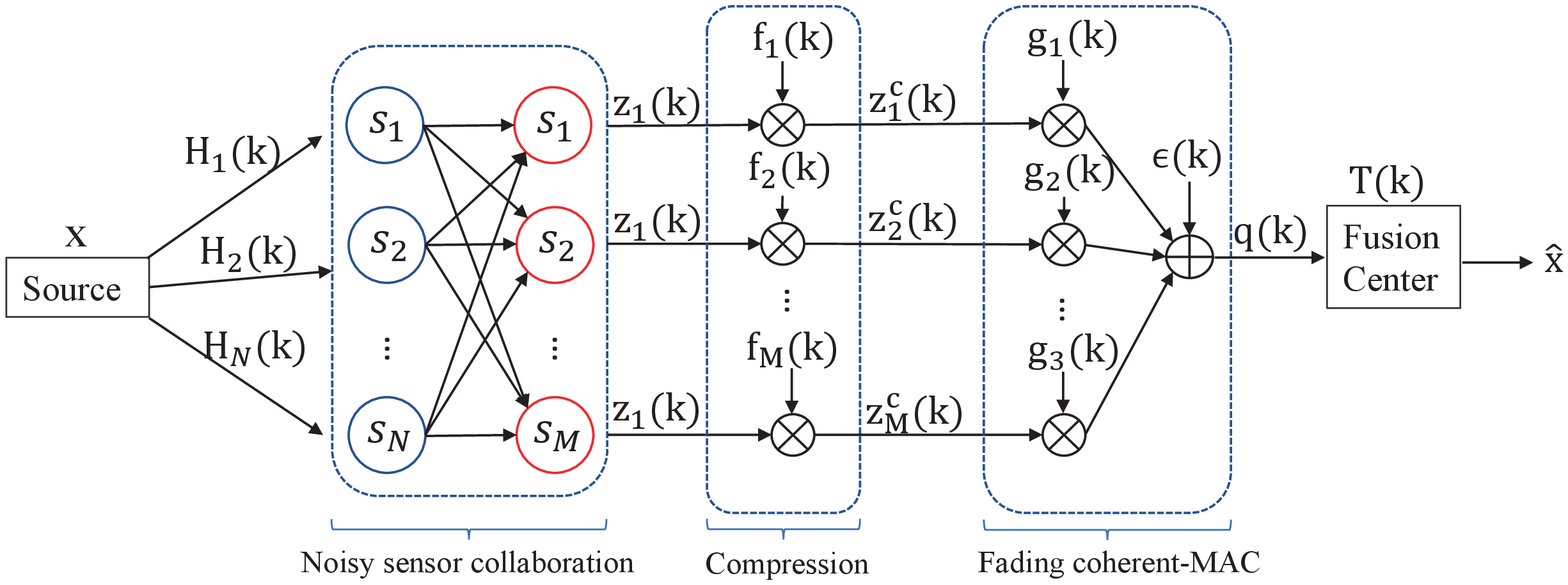}
\caption{Structure of Collaboration-compression framework}
\label{fig_sim}
\end{figure}

As discussed earlier, usually the communication cost between the sensors and the FC is much more expensive compared to the cost of inter-sensor communications. Therefore, instead of transmitting all the observations to the FC directly, a subset of ${M}$ sensors selected from ${N}$ sensors are tasked to communicate with the FC. 
Note that the choice of $M$ sensors depends on factors like proximity to the FC or the quality of channels from the local sensors to the FC. 

We assume that the network topology is fixed\footnote{In order to obtain the topology of the sensor network,
optimal sensor selection could be conducted first by minimizing the trace of the inverse of the Bayesian Fisher information matrix \cite{liu2016sensor}. 
Then the topology matrix of the sensor network $\textbf{A}$ can be further determined based on the quality of sensor observations and channel performance.} and is represented by matrix \textbf{A} with binary entries, that is to say, $A_{ij}\in \{0,1\}$ for $i\in[1,M]$ and $j\in[1,N]$ where $[1,M]$ denotes $\{1,2,\dots,M\}$. Specifically, $A_{ij}=1$ means that there is a communication link from the $j$th sensor to $i$th sensor, otherwise we have $A_{ij}=0$. 
We assume that $A_{ii}=1$ for all $i\in [1,M]$ as each sensor can collaborate with itself. If $M=N$ and $A_{ij}=0$ for all $i\neq j$, then the sensor collaboration scheme reduces to the basic amplify-and-forward (or compress and forward) transmission strategy introduced in \cite{akhtar2017distributed, xiao2008linear}. Without loss of generality, we assume that the sensors labeled $1$ to $M$ are tasked to communicate with the FC while the rest of the sensors labelled from $M+1$ to $N$ only participate in collaboration determined by the network topology.

Given the network topology, the structure of the sensor collaboration matrix $\textbf{W}(k)$ at each time instant $k$ is defined as:
\begin{equation}
    \mathbf{W}(k) \odot (\mathbf{1}_M \mathbf{1}_N^T - \mathbf{A}) = \mathbf{0}
\end{equation}
where $\mathbf{W}(k)\in \mathbb{R}^{M\times N}$ is the matrix of collaboration weights, $\odot$ denotes the Hadamard product, $\textbf{1}_M$ is the $M\times 1$ vector of all ones, and $\textbf{0}$ is the $M\times N$ matrix of all zeros. The signal at each sensor node after collaboration at time $k$ is modeled as:
\begin{equation}\label{Collaborative_signal} 
    \mathbf{z}_i(k) = w_{ii}(k)\mathbf{y}_i(k)+ \big[\sum_{j\in \mathcal{N}_i}{w_{ij}(k) \mathbf{y}_j}(k) + {\alpha}_i(k)\big ],i\in [1,M]
\end{equation}
where $w_{ij}(k)$ is the $(i,j)$th entry of the collaboration matrix $\textbf{W}(k)$ which means the weight of the observation transmitted from $j$th sensor to the $i$th sensor, $\mathcal{N}_i$ denotes the set of all the neighbors of the $i$th sensor, that is to say, $\mathcal{N}_i = \{j|A_{ij} = 1,j\in [1,N],j\neq i\}$. And $\boldsymbol{\alpha}_i(k)$ is the collaboration noise which is an i.i.d sequence with zero mean and covariance $\textbf{R}_{\boldsymbol{\alpha}_i}$.
The observation at each sensor after collaboration given in \eqref{Collaborative_signal} can be succinctly written as: 
  \begin{equation}
      \mathbf{z}_i(k) = \sum_{j=1}^N{w_{ij}(k) \mathbf{y}_j}(k) + \boldsymbol{\alpha}_i(k),i\in [1,M]
  \end{equation}

We refer to the observations obtained after collaboration as the {\em post-collaboration observations}.

The post-collaboration observations at all the sensors can be compactly expressed as follows.
\begin{equation}
    \mathbf{z}(k) = [\mathbf{W}(k)\otimes \mathbf{I}_L]\mathbf{y}(k) + \boldsymbol{\alpha}(k)
\end{equation}
where $\mathbf{y}(k) = \begin{bmatrix}\mathbf{y}_1^T(k) &\dots & \mathbf{y}_N^T(k)\end{bmatrix}^T\in \mathbb{R}^{NL\times 1}$ are the original observations of all the $N$ sensors, $\mathbf{z}(k) = \begin{bmatrix}\mathbf{z}_1^T(k) & \dots & \mathbf{z}_M^T(k)\end{bmatrix}^T\in \mathbb{R}^{ML\times 1}$ are the post-collaboration observations at the $M$ sensors, $\boldsymbol{\alpha}(k) = \begin{bmatrix}\boldsymbol{\alpha}_1^T(k),\dots,\boldsymbol{\alpha}_M^T(k)\end{bmatrix}^T\in \mathbb{R}^{ML\times 1}$ is the collaboration noise involved in inter-sensor communication, $\mathbf{W}(k)\in \mathcal{R}^{M\times N}$ is the collaboration weight at time $k$,  $\otimes$ denotes the Kronecker product and $\textbf{I}_L$ is the $L$-dimensional identity matrix.

Meanwhile, to further reduce the communication cost between the local sensors and the FC, the observations from sensor nodes with $i \in [1,M]$ are linearly compressed \cite{akhtar2017distributed,zhang2018optimal} as follows:
\begin{equation}
    {z}_i^c(k) = \mathbf{f}_i^T(k)\mathbf{z}_i(k) , i\in [1,M]
\end{equation}
where $\mathbf{f}_i(k)$ is the compression vector at $i$th sensor. 

We refer to the compressed post-collaboration observations as the \textit{compressed observations}. The compressed observations at all the sensors can be compactly written as
\begin{equation}\label{Compressed_signal}
     \mathbf{z}^c(k) 
     =\mathbf{F}(k)[\mathbf{W}(k)\otimes \mathbf{I}_L]\mathbf{y}(k) + \mathbf{F}(k)\boldsymbol{\alpha}(k)
\end{equation}
where $\mathbf{F}(k)= \text{blkdiag} \{\textbf{f}_1^T, \dots, \textbf{f}_M^T\}\in\mathbb{R}^{M\times ML}$ is the compression matrix, and  $\mathbf{z}^c(k) = [ z_1^c(k), \ldots, z_M^c(k)]^T$ represents the compressed observations from all the $M$ sensors.

In order to reduce unnecessary energy cost, in this paper we assume that each sensor is equipped with only one antenna while the FC is equipped with $S$ antennas. Therefore, the signal received by the $j$th antenna at the FC through a coherent MAC can be expressed as:
\begin{equation}
    q_j(k) = \mathbf{g}_j^T(k)\mathbf{z}^c(k) + \epsilon_j(k), j\in [1,S]
\end{equation}
where $\mathbf{g}_j(k)\in \mathbb{R}^{M\times 1}$ is the channel fading between the $M$ sensors and the $j$th antenna of the FC, and $\epsilon_i(k)$ is the corresponding channel noise at the $j$th antenna.

Therefore, the signal received at FC can be summarized as
\begin{equation}\label{q_expression}
\begin{aligned}
        \mathbf{q}(k) =&\mathbf{G}(k)\mathbf{z}^c(k) + \boldsymbol{\epsilon}(k)\\ =&\mathbf{G}(k)\mathbf{F}(k)[\mathbf{W}(k)\otimes \mathbf{I}_L]\mathbf{H}(k)\mathbf{x}\\
        & \qquad + \mathbf{G}(k)\mathbf{F}(k)[\mathbf{W}(k)\otimes \mathbf{I}_L]\mathbf{v}(k) \\
       & \qquad \qquad + \mathbf{G}(k)\mathbf{F}(k)\boldsymbol{\alpha}(k) + \boldsymbol{\epsilon}(k)
\end{aligned}
\end{equation}
where $\mathbf{q}(k) = \begin{bmatrix} q_1(k) \ldots q_S(k)\end{bmatrix}^T \in 
\mathbb{R}^{S\times 1}$ is the received signal at FC through $S$ antennas, $\mathbf{G}(k) = \begin{bmatrix} \mathbf{g}_1(k) \ldots  \mathbf{g}_S(k)  \end{bmatrix}^T \in \mathbb{R}^{S\times M}$ is the channel fading matrix, 
$\mathbf{H}(k) = \begin{bmatrix}\mathbf{H}_1^T(k) \ldots  \mathbf{H}_N^T(k)\end{bmatrix}^T\in \mathbb{R}^{NL\times P}$ is the linear observation matrix, $\mathbf{v}(k) = \begin{bmatrix}\mathbf{v}_1^T(k) \ldots  \mathbf{v}_N^T(k)\end{bmatrix}^T$
and $\epsilon(k)$ are the i.i.d additive Gaussian noise vectors at the FC with zero mean and covariance $\textbf{R}_{\boldsymbol{\epsilon}}=\mathbb{E}\{\boldsymbol{\epsilon}(k)\boldsymbol{\epsilon}^T(k)\}$.

For ease of notation, we define $\mathbf{D}(k) = \mathbf{G}(k)\mathbf{F}(k)[\mathbf{W}(k)\otimes \mathbf{I}_L]\mathbf{H}(k)$ and $\mathbf{n}_q(k) = \mathbf{G}(k)\mathbf{F}(k)[\mathbf{W}(k)\otimes \mathbf{I}_L]\mathbf{v}(k) + \mathbf{G}(k)\mathbf{F}(k)\boldsymbol{\alpha}(k) + \boldsymbol{\epsilon}(k)$, then (\ref{q_expression}) can be rewritten as
 \begin{equation}\label{q_simplified_expression}
        \textbf{q}(k) = \textbf{D}(k)\textbf{x}(k)+ \textbf{n}_q(k)
\end{equation}
Next, we discuss the communication cost associated with collaboration and compression. 

\subsection{Communication Cost }
Consider the energy consumption at each sensor of the entire system. It consists of two parts: inter-sensor communication for spatial collaboration and for transmitting the compressed observations from the selected sensors to the FC.

According to (\ref{Collaborative_signal}), the energy cost consumed by the $i$th sensor at time $k$ can be formulated as:
\begin{equation}\label{Collaboration_cost}
   C_i^{(1)}(k) =  \sum_{j\in{\mathcal{N}}_i}{\| w_{ji}(k)\mathbf{y}_i(k)\| }_2^2   ,i\in [1,N]
\end{equation}
where $\mathcal{N}_i$ denotes the set of all the neighbors of the $i$th sensor.
Given the collaboration matrix and the compression matrix, its expectation over random vector $\mathbf{x}$ can be expressed as 
\begin{equation}\label{collaboration_constraint}
    \begin{split}
        \mathbb{E}[{C}_i^{(1)}(k)] = &  \mathbb{E}[\boldsymbol{y}_i^T(k)\boldsymbol{y}_i(k)]\sum_{j\neq i,j=1}^{M} w_{ji}^2(k)\\
        =& \text{tr}[\mathbf{R}_{\mathbf{y}_i}(k)]\{\mathbf{e}_i^T[\mathbf{W}(k)\odot \widetilde{\mathbf{I}}]^T[\mathbf{W}(k)\odot \widetilde{\mathbf{I}}]\mathbf{e}_i\}\\
    \end{split}
\end{equation}
where $\mathbf{R}_{\mathbf{y}_i}(k) = \mathbf{H}_i(k) \mathbf{R}_x \mathbf{H}_i^T(k) + \mathbf{R}_{\mathbf{v}_i}$, $\text{tr}[\mathbf{R}_{\mathbf{y}_i}(k)]$ is the trace of $\mathbf{R}_{\mathbf{y}_i}(k)$, $\odot$ denotes the Hadamard product, $\widetilde{\mathbf{I}} = \mathbf{1}_M \mathbf{1}_N^T - [\mathbf{I}_M, \mathbf{0}_{M \times (N-M) }]$ and $\mathbf{e}_i$ is a basis vector whose $i$th element is 1, and 0 for others.

Besides, based on (\ref{Compressed_signal}), the energy consumed for transmitting the compressed observations from each sensor to the FC is given by
\begin{equation}\label{compression_cost}
    {C}_i^{(2)}(k)  =\bigg\{\mathbf{f}_i^T\bigg[\sum_{j=1}^N w_{ij}(k)\mathbf{y}_j(k) + \boldsymbol{\alpha}_i(k)\bigg]\bigg\}^2, i\in [1,M]
\end{equation}
and its expectation over $\mathbf{x}$ is written as
\begin{equation}\label{compression_constraint}
    \begin{split}
        \mathbb{E}[C_i^{(2)}(k)]
        =& \mathbf{f}_i^T(k)\mathbf{W}_i(k) \mathbf{R}_{\mathbf{y}}(k)\mathbf{W}_i^T(k) \mathbf{f}_i(k)\\
        &+ \mathbf{f}_i^T(k)(\mathbf{e}_i^T\otimes \mathbf{I}_L)\mathbf{R}_{\boldsymbol{\alpha}}(\mathbf{e}_i\otimes \mathbf{I}_L)\mathbf{f}_i(k)      
    \end{split}
\end{equation}
where $\mathbf{W}_i(k) = \mathbf{e}_i^T\mathbf{W}(k)\otimes \mathbf{I}_L$, $\otimes$ denotes the Kronecker product, $\mathbf{R}_\mathbf{y}$  and $\mathbf{R}_{\boldsymbol{\alpha}}$ are the covariances of the observation and collaboration noises which are given by
\begin{equation}
\mathbf{R}_\mathbf{y}(k) = \begin{bmatrix}
\mathbb{E}[\mathbf{y}_1(k)\mathbf{y}_1^T(k)] & \cdots & \mathbb{E}[\mathbf{y}_1(k)\mathbf{y}_N^T(k)]  \\ 
  \vdots& \ddots  & \vdots \\ 
\mathbb{E}[\mathbf{y}_N(k)\mathbf{y}_1^T(k)] & \cdots & \mathbb{E}[\mathbf{y}_N(k)\mathbf{y}_N^T(k)]
\end{bmatrix}
\end{equation}

\begin{equation}
\mathbf{R}_{\boldsymbol{\alpha}} = \begin{bmatrix}
\mathbb{E}(\boldsymbol{\alpha}_1\boldsymbol{\alpha}_1^T) & \cdots & \mathbb{E}(\boldsymbol{\alpha}_1\boldsymbol{\alpha}_M^T)  \\ 
  \vdots& \ddots  & \vdots \\ 
\mathbb{E}(\boldsymbol{\alpha}_M\boldsymbol{\alpha}_1^T) & \cdots & \mathbb{E}(\boldsymbol{\alpha}_M\boldsymbol{\alpha}_M^T)
\end{bmatrix}
\end{equation}
where 
\begin{equation}
\mathbb{E}[\mathbf{y}_i(k)\mathbf{y}_j^T(k)] = \left\{\begin{matrix}
 \mathbf{H}_i(k) \mathbf{R}_x \textbf{H}_i^T(k) + \mathbf{R}_{v_i},& i=j\\
 \mathbf{H}_i(k) \mathbf{R}_x \mathbf{H}_j^T(k),& i\neq j
\end{matrix}\right.
\end{equation}

\begin{equation}
\mathbb{E}(\boldsymbol{\alpha}_i\boldsymbol{\alpha}_j^T) = \left\{\begin{matrix}
\mathbf{R}_{\alpha_i},& i=j\\
 0,& i\neq j
\end{matrix}\right.
\end{equation}

In short, the overall energy consumed at the $i$th sensor could be expressed as:
\begin{equation}\label{Energy constraint}
    \mathbb{E}[{C}_i(k)] = \left\{\begin{array}{cl}
\mathbb{E}[{C}_i^{(1)}(k)] +\mathbb{E}[{C}_i^{(2)}(k)] , &i\in [1,M]\\
\mathbb{E}[{C}_i^{(1)}(k)],& i\in [M+1,N]
\end{array}\right.
\end{equation}

Before concluding this part, some intuition behind energy allocation  is discussed. For the first M sensors, there is a trade-off  among collaboration and compression. To obtain better estimation performance, those sensors with poor channel quality associated with the FC may prefer to allocate energy to collaboration instead of compression. On the contrary, those sensors with better channel quality prefer to assign most of their energy for compression.

\subsection{Problem Formulation}
In this part, a recursive estimator of $\mathbf{x}$ is presented under the proposed framework. Let all the previous observations at time $k$ be denoted as $\Theta(k)= \{\mathbf{q}(0),\mathbf{q}(1),\dots,\mathbf{q}(k)\}$, then, the optimal recursive estimator at time $k$ is given by 
\begin{equation}\label{x_update}
    \begin{split}
        \hat{\mathbf{x}}(k) =& \mathbb{E}[\mathbf{x}|\Theta(k)] \\
        =& \hat{\mathbf{x}}(k-1) +\mathbf{T}(k) \{\mathbf{q}(k)-\mathbb{E}[\mathbf{q}(k)|\Theta(k-1) \}
    \end{split}
\end{equation}
where $\mathbf{T}(k)$ is the filter gain at FC.

As the noise at time $k$ is independent of all the previous observations at time $k-1$, namely $\Theta(k-1)$, then 
\begin{equation}
    \mathbb{E}[\mathbf{q}(k)|\Theta(k-1)] = \mathbf{D}(k)\hat{\mathbf{x}}(k-1)
\end{equation}

Consequently, the estimation error can be expressed as
\begin{equation}\label{Estimation_error}
\begin{split}
    \mathbf{e}(k) =&  \mathbf{x}(k) - \hat{\mathbf{x}}(k)\\
    =&[\mathbf{I} - \mathbf{T}(k)\mathbf{D}(k)]\mathbf{e}(k-1) -\mathbf{T}(k)\mathbf{n}_q(k)
\end{split}
\end{equation}

Correspondingly, the error covariance is given by
\begin{equation}\label{error_covariance}
\begin{split}
        \mathbf{P}(k)  = &[\mathbf{I} - \mathbf{T}(k)\mathbf{D}(k)]\mathbf{P}(k-1)[\mathbf{I} - \mathbf{T}(k)\mathbf{D}(k)]^T\\
        &+ \mathbf{T}(k)\mathbf{R}_n(k)\mathbf{T}^T(k)
\end{split}
\end{equation}
where  
$\mathbf{R}_n(k) =\mathbf{G}(k)\mathbf{F}(k)[\mathbf{W}(k)\otimes \mathbf{I}_L]\mathbf{R}_v[\mathbf{W}(k)\otimes \mathbf{I}_L]^T\mathbf{F}^T(k)\mathbf{G}^T(k) +\mathbf{G}(k)\mathbf{F}(k)\mathbf{R}_{\boldsymbol{\alpha}}\mathbf{F}^T(k)\mathbf{G}^T(k) +   \mathbf{R}_{\epsilon}$ is the noise covariance and 

\begin{equation}
\mathbf{R}_{\mathbf{v}} = \begin{bmatrix}
\mathbb{E}(\mathbf{v}_1\mathbf{v}_1^T) & \cdots & \mathbb{E}(\mathbf{v}_1\mathbf{v}_N^T)  \\ 
  \vdots& \ddots  & \vdots \\ 
\mathbb{E}(\mathbf{v}_N\mathbf{v}_1^T) & \cdots & \mathbb{E}(\mathbf{v}_N\mathbf{v}_N^T)
\end{bmatrix}
\end{equation}
where 
\begin{equation}
\mathbb{E}(\mathbf{v}_i\textbf{v}_j^T) = \left\{\begin{matrix}
 \mathbf{R}_{\mathbf{v}_i},& i=j\\
0,& i\neq j
\end{matrix}\right.
\end{equation}

The Mean square error (MSE) could be written as 
$ \Phi\{\mathbf{W}(k),\mathbf{F}(k),\mathbf{T}(k) \}=\mathbb{E}[\textbf{e}^T(k)\textbf{e}(k)]=\text{tr}[\mathbf{P}(k)]$. Therefore, our problem could be summarized as 

\begin{equation}\label{Target_function}
\begin{aligned}
        \mathop{\text{minimize}}_{\mathbf{T},\mathbf{W},\mathbf{F}}   \quad \  &\Phi\{\mathbf{T}(k),\mathbf{W}(k),\mathbf{F(k)} \}\\
        \text{subject to} \ \quad &     \mathbf{W}(k) \odot (\mathbf{1}_M \mathbf{1}_N^T - \mathbf{A}) = \mathbf{0}\\
        & \mathbb{E}(C_i)\leq \mu_i, i\in [1,N]\\
\end{aligned}
\end{equation}

\section{Optimal collaboration and compression matrix design}
\label{Sec: Optimal collaboration and compression matrix design}
In this section, a detailed algorithm is introduced for implementing the optimal estimation scheme under the proposed framework. 
To achieve this, an iterative optimization method is used in this paper. In Sec \ref{Sec-A: optimal collaboraiton} to \ref{Sec-C: dcentralized compression}, we obtain the optimal solutions for $\mathbf{W}(k)$,$\mathbf{F}(k)$,$\mathbf{T}(k)$ which yield the centralized and decentralized algorithms respectively. Finally, the convergence of the proposed algorithm is investigated.

\subsection{Optimal Sensor Collaboration}\label{Sec-A: optimal collaboraiton}
As the structure of $\mathbf{W}(k)$ may be sparse which is determined by the sensor network topology, it makes the problem (\ref{Target_function}) hard to solve. 
To overcome this problem, motivated by \cite{zhang2018optimal}, we establish the correspondence between the structure of the sparse network topology and collaboration weights. 
Specifically, we first vectorize the collaboration matrix and eliminate all the elements whose corresponding entry in the network topology matrix $\mathbf{A}$ equals to zero, then constitute a new vector $\mathbf{w}\in \mathbb{R}^{U\times 1}$ where $U$ is the total number of nonzero elements in $\mathbf{A}$. Obviously, there exists a unique one-to-one mapping from $\mathbf{W}$ to $\mathbf{w}$ which can be expressed as
\begin{equation}\label{map_w_W}
    \mathbf{w}_u = [\mathbf{W}]_{m_u n_u}
\end{equation}
where $m_u$ and $n_u$ are the corresponding row and column indices of the $u$th entry of vector $\mathbf{w}$ and $[\mathbf{W}]_{mn}$ denotes the $(m,n)$th element of matrix $\mathbf{W}$.

By observing the expanded expression of the trace of the error covariance in (\ref{error_covariance_expression}), it can be seen that the error covariance consists of two specific functions in terms of collaboration matrix $\mathbf{W}(k)$, which are $\text{tr}[\mathbf{B}(\mathbf{W}\otimes \mathbf{I})\mathbf{C}(\mathbf{W}\otimes \mathbf{I})^T\mathbf{D}]$ and $\text{tr}[\mathbf{B}(\mathbf{W}\otimes \mathbf{I}_L)\mathbf{C}]$.
In order to simplify the original problem (\ref{Target_function}), a relationship between $\mathbf{W}$ and $\mathbf{w}$ is observed and provided in Proposition 1.

\begin{prop}
    Given a matrix $\mathbf{W}\in \mathbb{R}^{M\times N}$ and its columnwise vector $\mathbf{w} \in \mathbb{R}^U$ which consists of the nonzero element of $\mathbf{W}$ such that $w_u = W_{m_un_u}$ where $m_u\in[1,M]$,$n_u\in [1,N]$,$u\in [1,U]$. The expression of $\mathbf{a}^T(\mathbf{W}\otimes \mathbf{I}_L)$, $\text{tr}[\mathbf{B}(\mathbf{W}\otimes \mathbf{I}_L)\mathbf{C}(\mathbf{W}\otimes \mathbf{I}_L)^T\mathbf{D}]$, and $\text{tr}[\mathbf{B}(\mathbf{W}\otimes \mathbf{I}_L)\mathbf{C}]$ can be equivalently expressed as functions of $\mathbf{w}$.
    \begin{equation}\label{map_property1}
        \mathbf{a}^T(\mathbf{W}\otimes \mathbf{I}_L) = \mathbf{w}^T \mathbf{A}
    \end{equation}
    \begin{equation}\label{map_property2}
        \text{tr}[\mathbf{B}(\mathbf{W}\otimes \mathbf{I}_L)\mathbf{C}(\textbf{W}\otimes \mathbf{I}_L)^T\mathbf{D}] = \mathbf{w}^T \mathbf{E} \mathbf{w}
    \end{equation}
    \begin{equation}\label{map_property3}
        \text{tr}[\mathbf{B}(\mathbf{W}\otimes \mathbf{I}_L)\mathbf{C}] = \mathbf{w}^T \widetilde{\mathbf{c}}
    \end{equation}
where $\textbf{A}\in \mathbb{R}^{U\times (NL)}$, $\textbf{E}\in \mathbb{R}^{U\times U}$ and $\widetilde{\mathbf{c}}$ are given by 
\begin{equation}
    [\mathbf{A}]_{uj} = \left\{\begin{matrix}
 \mathbf{a}_{L(m_u-1)+ j-L\lfloor{\frac{j-1}{L}}\rfloor},& n_u = \lfloor{\frac{j-1}{L}}\rfloor+1\\ 
 0 ,&\text{otherwise}
\end{matrix}\right.
\end{equation} 
\begin{equation}\label{expression_E}
  \mathbf{E} = \sum_{i=1}\widetilde{\mathbf{B}}_i \mathbf{C} \widetilde{\mathbf{D}}_i^T  
\end{equation}
\begin{equation}
    \widetilde{\mathbf{c}}= \sum_{i=1}\widetilde{\mathbf{B}}_i \mathbf{c}_i
\end{equation}
Also, $\widetilde{\mathbf{B}}_i $ and $\widetilde{\mathbf{C}}_i$ are given by
\begin{equation}
    [\widetilde{\mathbf{B}}_i]_{uj}=\left\{\begin{matrix}
 [\mathbf{b}_i]_{L(m_u - 1)+ j-L\lfloor{\frac{j-1}{L}}\rfloor },& n_u=\lfloor{\frac{j-1}{L}}\rfloor +1 \\ 
 0,& \text{otherwise} 
\end{matrix}\right.
\end{equation}
\begin{equation}
[\widetilde{\textbf{D}}_i]_{uj}=\left\{\begin{matrix}
 [\textbf{d}_i]_{L(m_u - 1)+ j-L\lfloor{\frac{j-1}{L}}\rfloor },& n_u=\lfloor{\frac{j-1}{L}}\rfloor +1 \\ 
 0,& \text{otherwise} 
\end{matrix}\right.
\end{equation}
for $u\in [1,U]$ and $j\in [1,NL]$, where $\mathbf{b}_i$ is the $i$th row of matrix $\mathbf{B}$, $\mathbf{c}_i$ is the $i$th column of matrix $\mathbf{C}$ and $\mathbf{d}_i$ is the $i$th column of matrix $\mathbf{D}$.
\end{prop}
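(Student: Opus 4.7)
The plan is to treat identity \eqref{map_property1} as the fundamental building block and then derive \eqref{map_property2} and \eqref{map_property3} from it by expanding the trace as a sum of bilinear or quadratic terms, each of which is a single instance of \eqref{map_property1}. The key structural observation is that $\mathbf{W}\otimes \mathbf{I}_L$ has an $(m,n)$-block of size $L\times L$ equal to $W_{mn}\mathbf{I}_L$, so multiplying a vector against it only ever mixes entries with matching within-block index $\ell$, while the across-block weighting is exactly $W_{mn}$. The sparsity pattern enforced by the topology matrix $\mathbf{A}$ then lets us re-index the sum over $(m,n)$ as a sum over $u\in[1,U]$ with $(m,n)=(m_u,n_u)$, which is precisely the passage from $\mathbf{W}$ to $\mathbf{w}$.

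To prove \eqref{map_property1}, I would partition $\mathbf{a}\in\mathbb{R}^{ML}$ into $M$ subvectors of length $L$, so that entry $a_{L(m-1)+\ell}$ sits in block $m$ at position $\ell$. Writing the column index $j\in[1,NL]$ as $(n,\ell)$ with $n=\lfloor (j-1)/L\rfloor+1$ and $\ell=j-L\lfloor (j-1)/L\rfloor$, a direct block computation gives $[\mathbf{a}^T(\mathbf{W}\otimes\mathbf{I}_L)]_j=\sum_{m=1}^M W_{mn}\,a_{L(m-1)+\ell}$. Since $W_{mn}$ is nonzero only at $(m_u,n_u)$, this collapses to $\sum_{u:\,n_u=n} w_u\,a_{L(m_u-1)+\ell}$, which matches $\sum_u w_u[\mathbf{A}]_{uj}$ with the entries of $\mathbf{A}$ exactly as given in the proposition. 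The floor-function encoding is nothing more than the natural two-dimensional $(n,\ell)$ address for the column index $j$.

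For \eqref{map_property3}, I would write the trace as $\sum_i \mathbf{b}_i^T(\mathbf{W}\otimes\mathbf{I}_L)\mathbf{c}_i$ with $\mathbf{b}_i^T$ the $i$th row of $\mathbf{B}$ and $\mathbf{c}_i$ the $i$th column of $\mathbf{C}$. Applying \eqref{map_property1} with $\mathbf{a}=\mathbf{b}_i$ gives $\mathbf{b}_i^T(\mathbf{W}\otimes\mathbf{I}_L)=\mathbf{w}^T\widetilde{\mathbf{B}}_i$, and summing yields $\mathbf{w}^T\sum_i\widetilde{\mathbf{B}}_i\mathbf{c}_i=\mathbf{w}^T\widetilde{\mathbf{c}}$. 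For \eqref{map_property2}, the same row/column expansion produces $\sum_i \mathbf{b}_i^T(\mathbf{W}\otimes\mathbf{I}_L)\mathbf{C}(\mathbf{W}\otimes\mathbf{I}_L)^T\mathbf{d}_i$. Apply \eqref{map_property1} on the left to turn $\mathbf{b}_i^T(\mathbf{W}\otimes\mathbf{I}_L)$ into $\mathbf{w}^T\widetilde{\mathbf{B}}_i$, and on the right apply it to $\mathbf{d}_i^T(\mathbf{W}\otimes\mathbf{I}_L)=\mathbf{w}^T\widetilde{\mathbf{D}}_i$ and transpose to obtain $(\mathbf{W}\otimes\mathbf{I}_L)^T\mathbf{d}_i=\widetilde{\mathbf{D}}_i^T\mathbf{w}$. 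Putting the two pieces together gives $\mathbf{w}^T\widetilde{\mathbf{B}}_i\mathbf{C}\widetilde{\mathbf{D}}_i^T\mathbf{w}$, and summing over $i$ yields $\mathbf{w}^T\mathbf{E}\mathbf{w}$ with $\mathbf{E}=\sum_i\widetilde{\mathbf{B}}_i\mathbf{C}\widetilde{\mathbf{D}}_i^T$ as in \eqref{expression_E}.

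The conceptual content is light, so the main obstacle is purely index bookkeeping: one has to verify that the definitions of $\widetilde{\mathbf{B}}_i$ and $\widetilde{\mathbf{D}}_i$ in the proposition are exactly the instances of the construction of $\mathbf{A}$ in \eqref{map_property1} applied to $\mathbf{b}_i$ and $\mathbf{d}_i$ respectively, and that the floor-function column address $(n_u,\ell)\leftrightarrow L(n_u-1)+\ell$ is consistently used on both sides. Checking one representative nonzero entry against its block position in $\mathbf{W}\otimes\mathbf{I}_L$ should be enough to confirm the formulas; once that verification is done, the three identities follow by summation and the linearity of the trace.
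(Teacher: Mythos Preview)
Your proposal is correct and follows essentially the same approach as the paper's own proof: establish \eqref{map_property1} by a direct entrywise (block-indexed) computation, then derive \eqref{map_property2} and \eqref{map_property3} by expanding the trace as $\sum_i \mathbf{b}_i^T(\mathbf{W}\otimes\mathbf{I}_L)\cdots\mathbf{d}_i$ (equivalently, via $\mathbf{e}_i^T\mathbf{B}$ and $\mathbf{D}\mathbf{e}_i$) and applying \eqref{map_property1} on each factor. The paper's argument in Appendix~\ref{Appendix: Constraint function} is exactly this, with the same index bookkeeping you describe.
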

\begin{proof}
    See Appendix \ref{Appendix: Constraint function}.
\end{proof}

Based on Proposition 1, the original problem in terms of $\mathbf{W}(k)$ can be reformulated as the function of $\mathbf{w}$
\begin{equation}\label{Solution of w}
\begin{aligned}
        \mathop{\text{minimize}}_{\mathbf{w}}  \quad \  &\textbf{w}^T\boldsymbol{\Omega}^{(0)}\mathbf{w} - 2\mathbf{w}^T\mathbf{d}+ \eta_0\\
         \text{subject to} \quad & \mathbf{w}^T \boldsymbol{\Omega}_i^{(1)} \textbf{w}+\textbf{w}^T \boldsymbol{\Omega}_i^{(2)} \mathbf{w} + \eta_i \leq \mu_i, i\in [1,M]\\
          &\mathbf{w}^T \boldsymbol{\Omega}_i^{(1)} \mathbf{w}\leq \mu_i, i\in [M+1,N]\\
\end{aligned}
\end{equation}
where $\boldsymbol{\Omega}_i^{(0)}$,$\boldsymbol{\Omega}_i^{(1)}$ and $\boldsymbol{\Omega}_i^{(2)}$ are all positive definite matrices \footnote{If $\mathbf{r}^T\mathbf{A}\mathbf{r}>0$ holds for any nonzero vector $\mathbf{r}$, then the symmetric real matrix $\mathbf{A}$ is called positive definite matrix. Normally, it is denoted as $\mathbf{A} \succ 0$. Moreover, if $\mathbf{r}^T\mathbf{A}\mathbf{r}\geq 0$ holds for any vector $\mathbf{r}$, then the symmetric real matrix $\mathbf{A}$ is called positive semi-definite matrix. And it is denoted as $\mathbf{A} \succeq 0$.}. The expressions of these coefficient matrices and the proof of positive definiteness are both given in Appendix \ref{Appendix: Coefficient matrix}. 

The problem formulated in (\ref{Solution of w}) is a {\it{Quadratically Constrained Quadratic Programming}} (QCQP) problem which is an NP-hard problem in general. However, the coefficient $\boldsymbol{\Omega}_i^{(0)}$,$\boldsymbol{\Omega}_i^{(1)}$ and $\boldsymbol{\Omega}_i^{(2)}$ in (\ref{Solution of w}) are both positive definite, which means it is a convex QCQP problem. Therefore, it can be directly solved by using interior-point methods with standard solvers \cite{nemirovski2004interior}. 

By solving the QCQP problem in (\ref{Solution of w}), the optimal collaboration matrix is obtained. Once the collaboration matrix is determined, sensors share their weighted observations with their neighbors. 
These post-collaboration observations at the $M$ sensors are first compressed before transmission to the FC. 
Then the $M$ sensors send the compressed observations to the FC. 
This compression is designed to meet the power constraints at individual nodes as well as to reduce the communication overhead. Next, we design near-optimal compression strategies for the centralized and the decentralized cases. 

\subsection{Optimal Compression and Filter gain: Centralized compression case}\label{Sec-B: centralized compression}
In this section, we first consider the centralized case under the proposed framework. Note that, the compression matrix $\mathbf{F}(k)$ is a block diagonal matrix, which can not be optimized directly. However, if we rewrite the problem (\ref{Target_function}) in terms of $\mathbf{f}_i(k)$, the target function in (\ref{Target_function}) can be expressed as 
\begin{equation}\label{target_fun_f_cen}
    \begin{split}
        &\Phi\{\mathbf{T}(k),\mathbf{W}(k),\mathbf{F(k)} \}\\=&
         \text{tr}[\mathbf{P}(k-1)]+ \text{tr}[\mathbf{T}(k)\mathbf{R}_{\boldsymbol{\epsilon}} \mathbf{T}^T(k)]\\
        &+ \sum_{i=1}^{M}\mathbf{f}_i^T(k)\mathbf{W}_i(k)\mathbf{H}\mathbf{P}(k-1)\mathbf{H}^T\mathbf{W}_i^T(k)\mathbf{f}_i(k)\pi_{ii}\\
        &-2\sum_{i=1}^{M}\mathbf{f}_i^T(k)\mathbf{W}_i(k)\mathbf{H}\mathbf{P}(k-1)\mathbf{T}(k)\mathbf{g}_i \\
        &+ \sum_{i=1}^{M}\mathbf{f}_i^T(k)[\mathbf{W}_i(k)\mathbf{R}_{\mathbf{v}}\mathbf{W}_i^T(k)+\mathbf{R}_{\alpha_i} ]\mathbf{f}_i(k)\pi_{ii}\\
        &+ 2\sum_{\mathbb{S}^1}\mathbf{f}_i^T(k)\mathbf{W}_i(k)\mathbf{H}\mathbf{P}(k-1)\mathbf{H}^T\mathbf{W}_i^T(k)\mathbf{f}_j(k)\pi_{ji}\\
        &+ 2 \sum_{\mathbb{S}^1}\mathbf{f}_i^T(k)\mathbf{W}_i(k)\mathbf{R}_{\mathbf{v}}\mathbf{W}_i^T(k) \mathbf{f}_j(k)\pi_{ji}\\
        \end{split}
\end{equation}
where $\pi_{ij}=\mathbf{g}_i^T(k)\mathbf{T}^T(k)\mathbf{T}(k)\mathbf{g}_j(k)$ for $i\in [1,M]$, $j\in [1,N]$ and $\mathbb{S}^1=\{(i,j)|i\in [1,M],j\in [1,N] , i\neq j\}$. 

For simplicity,  the target function could be equivalently expressed as the function with respect to $\mathbf{f}_i(k)$
\begin{equation}\label{target_fun_fi}
    \begin{split}
        &\Upsilon_c\{\mathbf{f}_i(k),\mathbf{f}_j(k) \}\\=&
         \mathbf{f}_i^T(k)\mathbf{W}_i(k)\mathbf{H}(k)\mathbf{P}(k-1)\mathbf{H}^T(k)\mathbf{W}_i^T(k)\mathbf{f}_i(k)\pi_{ii}\\
        &-2 \mathbf{f}_i^T(k)\mathbf{W}_i(k)\mathbf{H}(k)\mathbf{P}(k-1)\mathbf{T}(k)\mathbf{g}_i \\
        &+ \mathbf{f}_i^T(k)[\mathbf{W}_i(k)\mathbf{R}_{\mathbf{v}}\mathbf{W}_i^T(k)+\mathbf{R}_{\alpha_i} ]\mathbf{f}_i(k)\pi_{ii}\\
        &+ 2\sum_{\mathbb{S}^2}\mathbf{f}_i^T(k)\mathbf{W}_i(k)\mathbf{H}\mathbf{P}(k-1)\mathbf{H}^T\mathbf{W}_i^T(k)\mathbf{f}_j(k)\pi_{ji}\\
        &+ 2\sum_{\mathbb{S}^2} \mathbf{f}_i^T(k)\mathbf{W}_i(k)\mathbf{R}_{\mathbf{v}}\mathbf{W}_i^T(k) \mathbf{f}_j(k)\pi_{ji}\\
        =&\mathbf{f}_i^T(k)\boldsymbol{\Omega}_i^{(3)}\mathbf{f}_i(k) -2\mathbf{f}_i(k)\mathbf{d}^{(1)}\\
    \end{split}
\end{equation}
where 
\begin{equation}
\begin{split}
      \boldsymbol{\Omega}^{(3)}=&\pi_{ii}\big[\mathbf{W}_i(k)\mathbf{H}(k)\mathbf{P}(k-1)\mathbf{H}^T(k)\mathbf{W}_i^T(k)\\
      & + \mathbf{W}_i(k)\mathbf{R}_{\mathbf{v}}\mathbf{W}_i^T(k)+\mathbf{R}_{\alpha_i}\big]\succeq 0  
\end{split}
\end{equation}
\begin{equation}
    \begin{split}
        \mathbf{d}^{(1)} =& \mathbf{W}_i(k)\mathbf{H}(k)\mathbf{P}(k-1)\mathbf{T}(k)\mathbf{g}_i\\
        &+\sum_{\mathbb{S}^2}\mathbf{W}_i(k)\mathbf{H}\mathbf{P}(k-1)\mathbf{H}^T\mathbf{W}_i^T(k)\mathbf{f}_j(k)\pi_{ji}\\
        &+\sum_{\mathbb{S}^2} \mathbf{W}_i(k)\mathbf{R}_{\mathbf{v}}\mathbf{W}_i^T(k) \mathbf{f}_j(k)\pi_{ji}
    \end{split}
\end{equation}
and where the matrix $\boldsymbol{\Omega}^{(3)}$ is positive semi-definite (please see Appendix \ref{Appendix: Coefficient matrix})
and we have $\mathbb{S}^2=\{j|j\in [1,N], i \in [1,M], j\neq i\}$.

From (\ref{target_fun_fi}), it can be seen that the optimization in terms of $\mathbf{f}_i$ depends on $\mathbf{f}_j$ for $j\neq i$. In order to obtain the optimal solution of $f_i$ for all $i\in [1,M]$, an alternative method is used here. At each iteration, each $\mathbf{f}_i(k)$ is optimized by solving the following problem
\begin{equation}\label{fi_central}
    \begin{split}
      \mathop{\text{minimize}}_{\mathbf{f}_i^t(k)} \quad &\Upsilon_c[\mathbf{f}_i^t(k),\mathbf{f}_j^{t-1}(k)] \\
        \text{subject to}\quad   & [\mathbf{f}_i^t(k)]^T\boldsymbol{\Omega}_i^{(4)}\mathbf{f}_i^t(k) + \lambda_i \leq \mu_i, i\in [1,M]\\
    \end{split}
\end{equation}
where $\mathbf{f}_i^t(k)$ denotes the $t$th iteration for $\mathbf{f}_i(k)$, $\boldsymbol{\Omega}_i^{(4)}$ and $\lambda_i$ are coefficients which are given by
\begin{equation}
    \boldsymbol{\Omega}_i^{(4)}= \mathbf{W}_i(k) \mathbf{R}_{\mathbf{Y}}\mathbf{W}_i^T(k) + (\mathbf{e}_i^T\otimes \mathbf{I}_L)\mathbf{R}_{\boldsymbol{\alpha}}(\mathbf{e}_i\otimes \mathbf{I}_L)    
\end{equation}
\begin{equation}
    \lambda_i = \text{tr}(\mathbf{R}_{\mathbf{y}_i})\{\mathbf{e}_i^T[\mathbf{W}(k)\odot \widetilde{\mathbf{I}}]^T[\mathbf{W}(k)\odot \widetilde{\mathbf{I}}]\mathbf{e}_i\}
\end{equation}
where $\boldsymbol{\Omega}_i^{(4)}$ is positive definite.
As the problem in (\ref{fi_central}) is also a convex QCQP problem, it could be solved similarly as (\ref{Solution of w}) by using interior-point methods with standard solvers.

Once the collaboration matrix $\mathbf{W}(k)$ and the compression matrix $\mathbf{F}(k)$ are obtained, the closed form of the filter gain $\mathbf{T}(k)$ is given by 
\begin{equation}\label{Closed_form_T}
\begin{split}
     \mathbf{T}(k)=& \mathbf{P}(k-1)\mathbf{H}^T[\mathbf{W}(k)\otimes \mathbf{I}]^T \mathbf{F}^T(k) \mathbf{G}^T \\ &[\mathbf{D}(k)  \mathbf{P}(k-1)
      \mathbf{D}^T(k) 
      +\mathbf{R}_n(k) ]^{-1}
\end{split}
\end{equation}

Then, the centralized algorithm for solving problem (\ref{Target_function}) is implemented.
The centralized estimation framework is detailed below.
\begin{enumerate}
    \item Each sensor estimates the observation matrices $\mathbf{H}_i(k)$  by using a pilot-based method \cite{tong2004pilot} which is transmitted to the neighbors that are tasked to communicate with the FC. 
    Then all these matrices are transmitted to the FC. Meanwhile, the FC also estimates the channel matrix $\mathbf{G}(k)$ by using the same technique.
    \item  At each time instant $k$, the collaboration matrix $\mathbf{W}(k)$, the compression matrix $\mathbf{f}_i(k)$ and the filter gain $\mathbf{T}(k)$ are optimized at the FC by solving (\ref{Solution of w}), (\ref{fi_central}), and (\ref{Closed_form_T}) separately.
    \item The FC broadcasts $\mathbf{W}(k)$ to all the $N$ sensors and $\mathbf{f}_i(k)$ to the $M$ sensors that are tasked to communicate with the FC.
    \item Then, the post-collaboration observations are compressed using (\ref{Compressed_signal}), and the compressed observations are transmitted to the FC.
\end{enumerate}

By observing (\ref{target_fun_f_cen}), one can see that the communication costs can be high if the observation matrices $\mathbf{H}(k)$ change quiet frequently, i.e., the channels are fast fading. However, if the coherence interval of the observation matrices  spans over multiple time instants, i.e., channels are slow fading \cite{xiao2008linear}, then the communication costs will be acceptable. 
In order to distribute some of the computational load of the FC, one can alternatively choose to design the compression matrices locally at the individual sensors. This algorithm is referred to as the decentralized compression case and is discussed next. 

\subsection{Optimal Compression and Filter gain: Decentralized compression case}\label{Sec-C: dcentralized compression}
In order to obtain compression vectors $\mathbf{f}_i(k)$ locally, the solution of $\mathbf{f}_i(k)$ can not depend on $\mathbf{f}_j(k)$ for $j \neq i$. Define
$\Lambda_i = \sum_{j\neq i}\textbf{g}_i\textbf{f}_i^T(k)\mathbf{W}_i(k)[\textbf{R}_{\textbf{v}}+\textbf{H}\textbf{P}(k-1)\textbf{H}^T]\mathbf{W}_j^T(k) \textbf{f}_j(k)\textbf{g}_j^T$, then $\mathbf{f}_i(k)$ could be obtained locally as long as the following condition holds 
\begin{equation}\label{condition_T}
    \text{tr}\Big\{\mathbf{T}(k)\Lambda_i(k)\mathbf{T}^T(k)\Big\}=0
\end{equation}
Using the above condition, the target function in (\ref{target_fun_fi}) becomes
\begin{equation}\label{target_f_decen}
    \Upsilon_d[\mathbf{f}_i(k)]= \mathbf{f}_i^T(k)\boldsymbol{\Omega}_i^{(3)}\mathbf{f}_i(k) -2\mathbf{f}_i^T(k)\mathbf{d}^{(2)}
\end{equation}
where $\mathbf{d}^{(2)} = \mathbf{W}_i(k)\mathbf{H}\mathbf{P}(k-1)\mathbf{T}(k)\mathbf{g}_i$.

Then, the optimization problem in terms of $\mathbf{f}_i(k)$ becomes:
\begin{equation}\label{decentralized_f}
    \begin{split}
      \mathop{\text{minimize}}_{\mathbf{f}_i(k)} \quad &\Upsilon_d[\mathbf{f}_i(k)] \\
        \text{subject to}\quad   & \mathbf{f}_i^T(k)\boldsymbol{\Omega}_i^{(4)}\mathbf{f}_i(k) + \lambda_i \leq \mu_i, i\in [1,M]\\
    \end{split}
\end{equation}

Clearly, (\ref{decentralized_f}) does not depend on the information from the other channels which enables each sensor to obtain their individual compression vectors locally. 
However, to ensure that the condition (\ref{condition_T}) holds, the solution of filter gain $\mathbf{T}(k)$ becomes a non-linear constrained problem as 
\begin{equation}\label{original_function_T}
\begin{aligned}
        \mathop{\text{minimize}}_{\mathbf{T}(k)}   \quad \  &\Phi\{\mathbf{T}(k),\mathbf{W}(k),\mathbf{F(k)} \}\\
        \text{subject to} \ \quad &         \text{tr}\Big\{\mathbf{T}(k)\Lambda_i(k)\mathbf{T}^T(k)\Big\}=0
\end{aligned}
\end{equation}
To solve this problem (\ref{original_function_T}), we reformulate it as: Let $\widetilde{\mathbf{t}}(k)= \text{vec}[\mathbf{T}(k)]$, where $\widetilde{\mathbf{t}}(k)$ is the vectorized form of $\mathbf{T}$. Then (\ref{original_function_T}) can be transformed as
\begin{equation}\label{Dc_T_optimization}
\begin{aligned}
    \mathop{\text{minimize}}_{\widetilde{\mathbf{t}}}  \quad \  &  \widetilde{\mathbf{t}}^T \boldsymbol{\Omega}_T^{(1)} \widetilde{\mathbf{t}}  -2 \widetilde{\boldsymbol{\ell}}^T \widetilde{\mathbf{t}} \\
   \text{subject to}\quad   & \widetilde{\mathbf{t}}^T \boldsymbol{\Omega}_T^{(2)} \widetilde{\mathbf{t}} = 0\\
\end{aligned}
\end{equation}
where $\boldsymbol{\Omega}_T^{(1)}$, $\boldsymbol{\Omega}_T^{(2)}$ and $\widetilde{\boldsymbol{\ell}}$ are given by 
\begin{equation}
    \begin{aligned}
       \boldsymbol{\Omega}_T^{(1)} = &\big \{ \mathbf{D}(k)\mathbf{P}(k-1)\mathbf{D}^T(k) \\  
       & + \mathbf{G}\mathbf{F}(k)[\mathbf{W}(k)\otimes\mathbf{I}_L]\mathbf{R}_{\mathbf{v}}[\mathbf{W}(k)\otimes\mathbf{I}_L]^T\mathbf{F}^T(k)\mathbf{G}^T\\
       & + \mathbf{G}\mathbf{F}(k) \mathbf{R}_{\alpha}\mathbf{F}^T(k)\mathbf{G}^T   + \mathbf{R}_{\boldsymbol{\epsilon}} \big \}\otimes \mathbf{I}_P
    \end{aligned}
\end{equation}
\begin{equation}\label{omega_T_2}
    \boldsymbol{\Omega}_T^{(2)} = \mathbf{B}^T(k) \otimes \mathbf{I}_P
\end{equation}
\begin{equation}
    \widetilde{\boldsymbol{\ell}} = \text{vec}\{\mathbf{P}(k-1)\mathbf{H}^T[\mathbf{W}(k)\otimes\mathbf{I}_L]^T\mathbf{F}^T(k)\mathbf{G}^T\} 
\end{equation}
where $\mathbf{B}(k) = \sum_{i=1}^{M}\boldsymbol{\Lambda}_i$.

Even though $\boldsymbol{\Omega}_T^{(1)}$ is a positive definite matrix which means the target function in (\ref{Dc_T_optimization}) is convex, $\boldsymbol{\Omega}_T^{(2)}$ is not a positive definite matrix. Thus, problem (\ref{Dc_T_optimization}) turns out to be a non-convex QCQP which is hard to solve in general. 
However, there is only one constraint in problem (\ref{Dc_T_optimization}). Motivated by the strategy proposed in \cite{park2017general} for solving one-constraint QCQP problems, the problem (\ref{Dc_T_optimization}) can be solved by making use of the symmetry of matrix $\boldsymbol{\Omega}_T^{(2)}$.


Due to the fact that $\boldsymbol{\Omega}_T^{(1)}$ is a positive definite matrix, its eigenvalue decomposition is given by $\boldsymbol{\Omega}_T^{(1)}= \mathbf{U}_{1}\boldsymbol{\Sigma}_{1}\mathbf{U}_{1}^T$ and $\boldsymbol{\Sigma}_1=\text{diag}(\delta_1,\delta_2,\dots,\delta_{PS})$ where $\delta_i>0$ is the $i$th eigenvalue of $\boldsymbol{\Omega}_T^{(1)}$. Let $\mathbf{V} = \mathbf{U}_1 \boldsymbol{\Sigma}_1^{1/2}$, then 
\begin{equation}
    \mathbf{V}^{-1}\boldsymbol{\Omega}_T^{(1)}\mathbf{V}^{-T} = \boldsymbol{\Sigma}_1^{-1/2}\mathbf{U}_1^T\boldsymbol{\Omega}_T^{(1)}\mathbf{U}_1 \boldsymbol{\Sigma}_1^{-1/2}=\mathbf{I}_{PS}
\end{equation}

Notice that $\boldsymbol{\Omega}_T^{(2)}$ is real symmetric then $\mathbf{V}^{-1}\boldsymbol{\Omega}_T^{(2)}\mathbf{V}^{-T}$ is also real symmetric which means it can be diagonalized as $\mathbf{V}^{-1}\boldsymbol{\Omega}_T^{(2)}\mathbf{V}^{-T} = \mathbf{U}_{2}\boldsymbol{\Sigma}_{2}\mathbf{U}_{2}^T$. Let $\mathbf{M}_0 = \mathbf{V}\mathbf{U}_2$, then
\begin{equation}
    \mathbf{M}_0^{-1}\boldsymbol{\Omega}_T^{(2)}\mathbf{M}_0^{-T} = \mathbf{U}_2^T\mathbf{V}^{-1}\boldsymbol{\Omega}_T^{(2)}\mathbf{V}^{-T}\mathbf{U}_2 = \boldsymbol{\Sigma}_{2}
\end{equation}

Now, let $\mathbf{M} = \mathbf{M}_0^{-1}$, it can be shown that 
\begin{equation}
    \mathbf{M}\boldsymbol{\Omega}_T^{(1)}\mathbf{M}^T =\mathbf{I}_{PS},\quad \mathbf{M}\boldsymbol{\Omega}_T^{(2)}\mathbf{M}^T =\boldsymbol{\Sigma}_{2}
\end{equation}
where $\boldsymbol{\Sigma}_2 = \text{diag}(\sigma_1,\dots,\sigma_{PS})$, and $\sigma_i$ is the $i$th eigenvalue of $\boldsymbol{\Omega}_T^{(2)}$.

Then, the problem in (\ref{Dc_T_optimization}) can be rewritten as
\begin{equation}\label{T_expression1}
 \begin{aligned}
    \mathop{\text{minimize}}_{{\mathbf{r}}}  \quad \  &  \mathbf{r}^T \mathbf{r}  -2 \mathbf{r}^T \mathbf{M}\widetilde{\boldsymbol{\ell}} \\
   \text{subject to}\quad   & \mathbf{r}^T \boldsymbol{\Sigma}_2\mathbf{r} = 0\\
\end{aligned}   
\end{equation}
where $\mathbf{r} = \mathbf{M}^{-T}\widetilde{\mathbf{t}}$.

The Lagrangian of problem (\ref{Dc_T_optimization}) is given by 
\begin{equation}\label{T_Lagrangian}
    L(\mathbf{r},\beta) = \mathbf{r}^T(\mathbf{I} + \beta \boldsymbol{\Sigma}_2)\mathbf{r} -2 \mathbf{r}^T \mathbf{M}\widetilde{\boldsymbol{\ell}}
\end{equation}
Since $\boldsymbol{\Omega}_T^{(2)}$ is non-positive definite, there exists a feasible $\beta$ that could satisfy $\mathbf{I} + \beta \boldsymbol{\Sigma}_2 \succeq 0$. Then, there are two cases:
\begin{enumerate}
    \item Case one: $\mathbf{I} + \beta \boldsymbol{\Sigma}_2 \succ 0$
    
    Notice the range of $\beta$ which satisfies $\mathbf{I} + \beta \boldsymbol{\Sigma}_2 \succ 0$, that is to say, $1 + \beta \sigma_i > 0$ for all $i$. In this range, we can find the minimum value of problem (\ref{T_Lagrangian}) by taking the derivative with respect to $\mathbf{r}$ and letting it equal to zero:
    \begin{equation}\label{r_solution}
        \mathbf{r} = (\mathbf{I}+\beta \Sigma_2)^{-1}\mathbf{M}\boldsymbol{\widetilde{\ell}}
    \end{equation}
    
    Substitute $\mathbf{r}$ into the equality constraint in (\ref{T_expression1}), and let $\mathbf{m} = \mathbf{M}\boldsymbol{\widetilde{\ell}}$, then we could get a nonlinear equation with respect to $\beta$
    \begin{equation}\label{beta_solution}
        \sum_i^{PS}\frac{\sigma_i m_i^2}{(1+\beta \sigma_i)^2}=0
    \end{equation}
    where $m_i$ is the $i$th element of $\mathbf{m}$.
    Therefore, as long as the solution obtained from (\ref{beta_solution}) belongs to the range of $\mathbf{I}+ \beta \boldsymbol{\Sigma}_2\succ 0$, then the corresponding $\mathbf{r}$ in (\ref{r_solution}) is the optimal solution of problem (\ref{T_expression1}). To obtain the solution of (\ref{beta_solution}), notice that the derivative of the lefthand side in (\ref{beta_solution}) with respect to $\beta$ is 
    \begin{equation}
        -\sum_i^{PS} \frac{2\sigma_i^2 m_i^2}{(1+\beta \sigma_i)^3}<0
    \end{equation}
    which means the lefthand side of (\ref{beta_solution}) monotonically decreases with increasing $\beta$. Therefore, we can find the solution by looking for where the change of sign in the lefthand happens using the bisection method.
    \item Case two: $\mathbf{I} + \beta \boldsymbol{\Sigma}_2 \succeq 0$ and $\mathbf{I} + \beta \boldsymbol{\Sigma}_2$ is singular. As $\boldsymbol{\Omega}_T^{(2)}$ is indefinite, there are two solutions of $\beta$ that are possible. One is that $\beta = -{1}/{\sigma_{min}}$ when $\sigma_{min} < 0$, the other one is that $\beta = -1/\sigma_{max}$ when $\sigma_{max} >0$. Then, check if there is any $\mathbf{r}$ that could make the Karush-Kuhn-Tucker (KKT) conditions hold for these $\beta$:
    \begin{equation}
        (\mathbf{I}+\beta \boldsymbol{\Sigma}_2)\mathbf{r} = -\mathbf{M}\boldsymbol{\widetilde{\ell}},\quad
        \mathbf{r}^T \boldsymbol{\Sigma}_2\mathbf{r} = 0
    \end{equation}
\end{enumerate}

Once $\mathbf{r}$ is obtained, one can get the solution of (\ref{Dc_T_optimization}) from $\mathbf{\widetilde{t}}^{\star}=\mathbf{M}^{-1}\mathbf{r}$.
Then the near optimal filter gain $\mathbf{T}(k)$ could be obtained by reshaping $\widetilde{\mathbf{t}}(k)$. The detailed steps for decentralized estimation are summarized in Algorithm.\ref{Algorithm.1}. 

The decentralized sequential estimation algorithm is detailed as follows.
\begin{enumerate}
    \item Each sensor estimates the observation matrices $\mathbf{H}_i(k)$  and the FC estimates the channel $\mathbf{G}(k)$ between the local sensors and the FC by using the pilot-based method same as for the centralized algorithm.
    \item At each time instant $k$, the FC updates the collaboration matrix and filter gain by solving (\ref{Solution of w}) and (\ref{Closed_form_T}). Then, the FC broadcasts updated $\mathbf{W}(k)$ and $\mathbf{T}(k)$ to all the local sensors.
    \item Local sensors which are tasked to communicate with the FC update their compression vectors locally by solving (\ref{decentralized_f}). 
    The post-collaboration observations are compressed using (\ref{Compressed_signal}) then the compressed observations are transmitted to the FC.
    \item Prior to transmitting the compressed data, local sensors first coherently transmit the packet headers which consist of $\mathbf{f}_i$ and $\mathbf{H}(k)$ to the FC.
\end{enumerate}

\subsection{Convergence analysis}
In this part, the convergence of the R-LMMSE estimator is analyzed. The following lemma shows the strict monotonicity of the proposed sequential estimator under certain condition.
\begin{lem}\label{lemma1}
As long as the designed  collaboration matrix $\mathbf{W}(k)$ and compression matrix $\mathbf{F}(k)$ satisfies $\mathbf{D}(k)\neq \mathbf{0}$,
MSE will strictly decrease with the update of $\mathbf{W}(k)$, $\mathbf{F}(k)$ and $\mathbf{T}(k)$. In other words, it will satisfy the following property
    \begin{equation}
        \Phi[\mathbf{W}(k-1),\mathbf{F}(k-1),\mathbf{T}(k-1) ] - \Phi[\mathbf{W}(k),\mathbf{F}(k),\mathbf{T}(k)] > 0
    \end{equation}
\end{lem}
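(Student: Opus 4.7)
The plan is to exploit the quadratic dependence of the error covariance on $\mathbf{T}(k)$, and to verify that the LMMSE choice of $\mathbf{T}(k)$ used by the algorithm strictly shrinks the trace of $\mathbf{P}(k)$ whenever $\mathbf{D}(k)\neq\mathbf{0}$ and $\mathbf{P}(k-1)\succ 0$.

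First, I would fix any feasible $\mathbf{W}(k),\mathbf{F}(k)$ with $\mathbf{D}(k)\neq\mathbf{0}$ and view (\ref{error_covariance}) as a quadratic in $\mathbf{T}(k)$. A standard matrix derivative (or completion of squares) shows that the unconstrained minimizer is the LMMSE gain
\[
\mathbf{T}^{\star}(k)=\mathbf{P}(k-1)\mathbf{D}^T(k)\bigl[\mathbf{D}(k)\mathbf{P}(k-1)\mathbf{D}^T(k)+\mathbf{R}_n(k)\bigr]^{-1},
\]
which is exactly the closed-form expression (\ref{Closed_form_T}) used in both algorithms. Substituting $\mathbf{T}^{\star}(k)$ into (\ref{error_covariance}) and simplifying with the matrix inversion lemma yields the contracted (information) form
\[
\mathbf{P}(k)=\mathbf{P}(k-1)-\mathbf{P}(k-1)\mathbf{D}^T(k)\bigl[\mathbf{D}(k)\mathbf{P}(k-1)\mathbf{D}^T(k)+\mathbf{R}_n(k)\bigr]^{-1}\mathbf{D}(k)\mathbf{P}(k-1).
\]

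Second, I would establish strict monotonicity by induction on $k$. The base case $\mathbf{P}(-1)=\mathbf{R}_x\succ\mathbf{0}$ holds by assumption on the prior. For the inductive step, $\mathbf{R}_n(k)\succ\mathbf{0}$ because it contains the additive FC-noise covariance $\mathbf{R}_{\boldsymbol\epsilon}\succ\mathbf{0}$, so the bracketed matrix is positive definite and its inverse is positive definite. Thus the subtracted matrix has the form $\mathbf{X}^T\mathbf{Y}\mathbf{X}$ with $\mathbf{X}=\mathbf{D}(k)\mathbf{P}(k-1)$ and $\mathbf{Y}\succ\mathbf{0}$, hence is positive semidefinite with strictly positive trace iff $\mathbf{X}\neq\mathbf{0}$. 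Combined with the information-form update $\mathbf{P}(k)^{-1}=\mathbf{P}(k-1)^{-1}+\mathbf{D}^T(k)\mathbf{R}_n^{-1}(k)\mathbf{D}(k)\succ\mathbf{0}$, this also carries positive definiteness of $\mathbf{P}$ forward to the next step.

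Third, taking traces gives
\[
\Phi[\mathbf{W}(k-1),\mathbf{F}(k-1),\mathbf{T}(k-1)]-\Phi[\mathbf{W}(k),\mathbf{F}(k),\mathbf{T}(k)]=\mathrm{tr}\bigl(\mathbf{X}^T\mathbf{Y}\mathbf{X}\bigr)>0,
\]
because $\mathbf{D}(k)\neq\mathbf{0}$ and $\mathbf{P}(k-1)\succ\mathbf{0}$ together imply $\mathbf{X}=\mathbf{D}(k)\mathbf{P}(k-1)\neq\mathbf{0}$. The main obstacle I anticipate is handling the decentralized case, where $\mathbf{T}(k)$ is constrained by (\ref{condition_T}) and need not coincide with the unconstrained LMMSE gain; there I would argue that the constraint set still contains $\mathbf{T}(k)=\mathbf{0}$ (giving $\mathbf{P}(k)=\mathbf{P}(k-1)$) and that the first-order optimality condition for (\ref{original_function_T}) together with $\mathbf{D}(k)\neq\mathbf{0}$ forces the optimum to lie strictly below this trivial value, so the trace decrease is still strict. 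The remaining bookkeeping, namely checking positive definiteness of $\mathbf{R}_n(k)$ and the induction for $\mathbf{P}(k-1)\succ\mathbf{0}$, is routine.
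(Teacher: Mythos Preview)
Your argument is correct and follows the same skeleton as the paper: plug in the closed-form LMMSE gain~(\ref{Closed_form_T}), obtain the contracted update $\mathbf{P}(k)=\mathbf{P}(k-1)-\mathbf{P}(k-1)\mathbf{D}^T(k)\mathbf{R}_a\mathbf{D}(k)\mathbf{P}(k-1)$ with $\mathbf{R}_a=[\mathbf{D}(k)\mathbf{P}(k-1)\mathbf{D}^T(k)+\mathbf{R}_n(k)]^{-1}$, and then argue that the subtracted term has strictly positive trace when $\mathbf{D}(k)\neq\mathbf{0}$.

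The differences are mostly in the execution of the last step and in scope. The paper invokes an auxiliary trace inequality, $\lambda_{\min}(\mathbf{A})\,\mathrm{tr}(\mathbf{B})\le\mathrm{tr}(\mathbf{AB})$, applied twice to bound $\Phi(k-1)-\Phi(k)\ge\lambda_{\min}^2[\mathbf{P}(k-1)]\,\lambda_{\min}(\mathbf{R}_a)\,\|\mathbf{D}(k)\|_F^2$, and then notes $\lambda_{\min}(\mathbf{R}_a)>0$. Your route is more elementary: writing the subtracted term as $\mathbf{X}^T\mathbf{Y}\mathbf{X}$ with $\mathbf{Y}\succ\mathbf{0}$ and observing that $\mathrm{tr}(\mathbf{X}^T\mathbf{Y}\mathbf{X})>0$ whenever $\mathbf{X}\neq\mathbf{0}$ avoids the extra lemma entirely. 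You also add two pieces the paper's proof does not: an explicit induction propagating $\mathbf{P}(k-1)\succ\mathbf{0}$ via the information form, and a sketch for the decentralized case~(\ref{original_function_T}), where the paper's appendix only treats the unconstrained centralized gain. What the paper's bound buys in exchange is a quantitative lower estimate on the per-step decrease, which could in principle be leveraged for a rate statement; your argument yields only strict monotonicity, which is all the lemma claims.
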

\begin{proof}
    See Appendix \ref{Appendix: Prove of MSE convergence}
    \end{proof} 
As can be seen from Lemma \ref{lemma1}, the monotonicity of the R-LMMSE is not affected by the collaboration and compression strategies as long as the condition $\mathbf{D}(k)\neq \mathbf{0}$ is satisfied.
However, it is evident that the algorithm will converge faster with the suitably designed collaboration and compression strategies. In other words, by designing optimal or near-optimal collaboration and compression strategies, the rate of convergence of the estimator can be improved.
\subsection{Benchmark}
In this section, we present a benchmark algorithm to compare the performance of the proposed algorithms. 
Specifically, we assume that the FC has access to all the observations (uncompressed and without collaboration) from all $N$ sensors. Then, the R-LMMSE estimator of $\mathbf{x}$ for the benchmark system is given as
\begin{equation}
    \hat{\mathbf{x}}(k) = \hat{\mathbf{x}}(k-1) + \mathbf{T}(k)[\mathbf{y}(k) - \mathbf{H}(k)\hat{\mathbf{x}}(k-1)]
\end{equation}
where $\mathbf{y}(k) = \begin{bmatrix}\mathbf{y}_1^T(k) &\dots & \mathbf{y}_N^T(k)\end{bmatrix}^T\in \mathbb{R}^{NL\times 1}$. And the corresponding filter gain and error covariance update are given by
\begin{equation}
    \mathbf{T}(k) = \mathbf{P}(k-1)\mathbf{H}^T(k)[\mathbf{H}(k)\mathbf{P}(k-1)\mathbf{H}^T(k)+\mathbf{R}_{\mathbf{v}}]^{-1}
\end{equation}
\begin{equation}
    \mathbf{P}(k) = \mathbf{P}(k-1) - \mathbf{T}(k)\mathbf{H}(k)\mathbf{P}(k-1)
\end{equation}
Since for this system the FC makes use of all the observations from each sensor and provides the best achievable performance, it is reasonable to adopt this estimator as the benchmark.
\begin{algorithm}\label{Algorithm.1}
 {\bf Initialization:} $\mathbf{W}(0)$, $\mathbf{F}(0)$, $\mathbf{P}(0)$ \\
 {\bf While}{ $k>0$} \\
\indent {\bf ~~If}{ Compression matrix is computed {\em centrally}}
 {
 \begin{itemize}
\item Update collaboration matrix $\mathbf{W}(k)$ using (\ref{Solution of w})
 \item Update compression matrix $\{\mathbf{f}_i(k)\}_{i=1}^M$ at FC using (\ref{fi_central})
  \item Update filter gain matrix $\mathbf{T}(k)$ using (\ref{Closed_form_T})
  \item Update error-covariance matrix $\mathbf{P}(k)$ using (\ref{error_covariance})
 \end{itemize}
  }
\indent {\bf ~~If}{ Compression matrix is computed {\em locally}}
\begin{itemize}
    \item Update collaboration matrix $\mathbf{W}(k)$ using (\ref{Solution of w})
  \item Update compression matrix $\{\mathbf{f}_i(k)\}_{i=1}^M$ locally using (\ref{decentralized_f})
 \item Update filter gain matrix $\mathbf{T}(k)$ using (\ref{Dc_T_optimization})
  \item Update error covariance matrix $\mathbf{P}(k)$ using (\ref{error_covariance})
  \end{itemize}
  {\bf End}
\caption{Distributed Sequential MMSE estimation}
\end{algorithm}

\section{Numerical results}
\label{Sec: Numerical results}
In this section, we present several simulation results to demonstrate the effectiveness of our proposed algorithms. Specifically, the MSE performance as a function of various parameters is considered.

\begin{figure}[!t]
\centering
\includegraphics[width=3in, height=2.1 in]{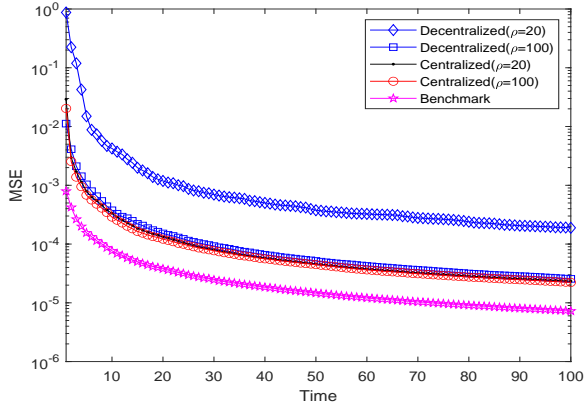}
\caption{MSE performance of different algorithms with respect to time where $\rho$ denotes the numbers of iteration during each time slot $k$.}
\label{fig_mse_time}
\end{figure}

\begin{figure}[!b]
\centering
\includegraphics[width=3in, height=2.1 in]{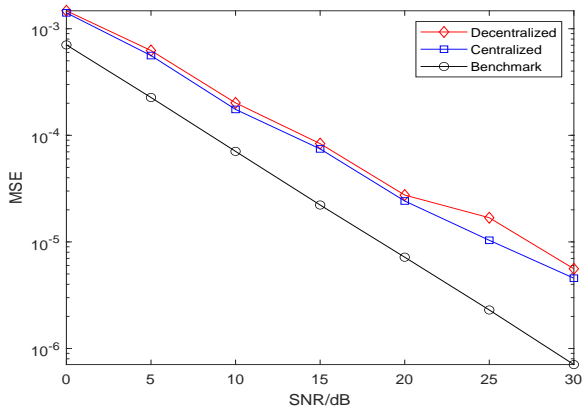}
\caption{MSE performance as a function of SNR in terms of observation noise.}
\label{fig_mse_snr1}
\end{figure}


\begin{figure}[!t]
\centering
\includegraphics[width=3in, height=2.1 in]{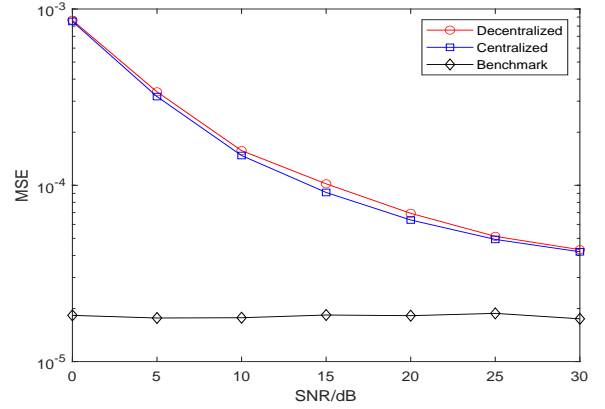}
\caption{MSE performance as a function of SNR in terms of communication noise at the FC.}
\label{fig_mse_snr3}
\end{figure}

\begin{figure}[!b]
\centering
\includegraphics[width=3in, height=2.1 in]{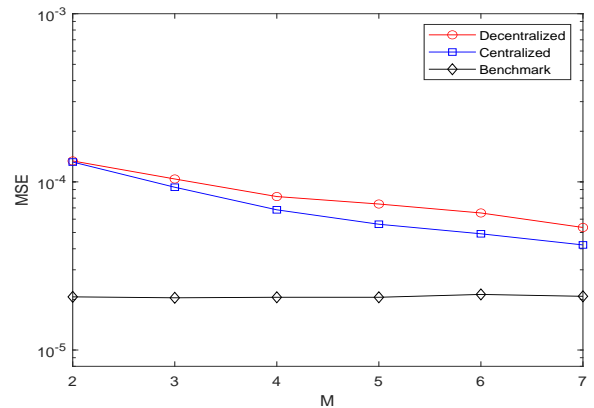}
\caption{MSE performance as a function in terms of number of sensors $M$ that are tasked to communicate with the FC.}
\label{fig_mse_M}
\end{figure}

For ease of comparison, a random vector $\mathbf{x}$ with zero mean and covariance matrix $\mathbf{R}_{\mathbf{x}}=\mathbf{I}_P$ is considered. At each time $k$, the individual elements of the observation matrix $\mathbf{H}(k)$ and channel matrix $\mathbf{G}(k)$ are assumed as zero mean and unit-variance Gaussian random variables. At the same time, the observation noise, $\mathbf{v}_i(k)$, collaboration noise, $\boldsymbol{\alpha}_i(k)$, and communication channel noise at the FC, $\boldsymbol{\epsilon}(k)$, are all assumed to be independent (spatially and temporally) zero mean Gaussian random vectors with covariance matrices $\mathbf{R}_{\mathbf{v}_i}= \sigma_{\mathbf{v}_i}^2\mathbf{I}_L$, $\mathbf{R}_{\boldsymbol{\alpha}_i} = \sigma_{\boldsymbol{\alpha}_i}^2
\mathbf{I}_L$ and $\mathbf{R}_{\boldsymbol{\epsilon}} = \sigma_{\boldsymbol{\epsilon}}^2
\mathbf{I}_S$, respectively. 
We define the SNR in terms of observation noise, collaboration noise and communication noise at the FC as  $1/\sigma_{\mathbf{v}_i}^2$,  $1/\sigma_{\boldsymbol{\alpha}_i}^2$ and $1/\sigma_{\boldsymbol{\epsilon}}^2$, respectively. In the following, the SNR across all of the channels is set as 20 dB, unless otherwise specified. 

Fig.\ref{fig_mse_time} presents the MSE performance of the proposed centralized algorithm and decentralized algorithm on a wireless sensor network with $P=3$, $L = 6$, $N = 7$ and $M = 3$. 
Also, the benchmark introduced in Section.III.F is used here for comparison. 
The sensor network topology is set as fully connected which means $\mathbf{A}_{mn}=1$ for all $m\in[1,M],n\in [1,N]$.   
As can be seen, both the centralized and decentralized algorithms perform well and the MSE  converges with time $k$. The MSE performance in the decentralized case is poorer compared with the centralized case as the solution of the filter gain in (\ref{Dc_T_optimization}) is a constrained optimization problem while in the centralized case it is an unconstrained problem. The number of iterations are set as $\rho =20$ and $\rho =100$ respectively for the two cases.
The interesting thing is that with more than 100 iterations, the decentralized algorithm can achieve 
almost the same performance as the centralized case which proves the effectiveness of the decentralized algorithm. 

\begin{figure}[!t]
\centering
\includegraphics[width=3in, height=2.1in]{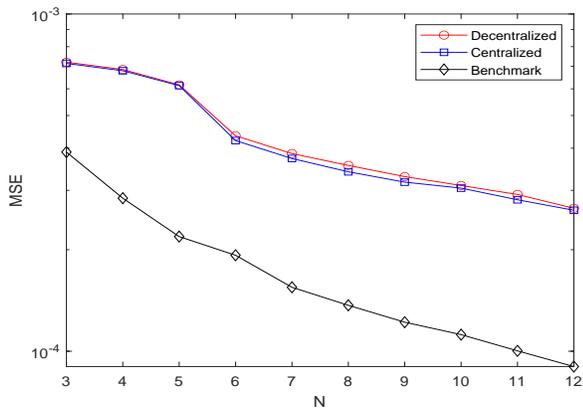}
\caption{MSE performance as a function of a number of sensors $N$ .}
\label{fig_mse_N}
\end{figure}

In Figures \ref{fig_mse_snr1} and \ref{fig_mse_snr3}, we plot the MSE performance of the proposed algorithms for different channel SNRs defined earlier in this section. Fig.\ref{fig_mse_snr1} shows the MSE as a function of measurement noise with $k=100$, $\rho =100$. Here, the time $k$ and the number of iterations $\rho$ are both kept sufficiently large to ensure that the algorithms have converged to sufficient accuracy. It can be seen that the MSE decreases with SNR as expected. MSE as a function of communication channel noise at the FC as shown in Fig.\ref{fig_mse_snr3} shows a similar behavior.

In Fig.\ref{fig_mse_M}, the MSE performance as a function of the number of sensors $M$ that could communicate with FC is given for $p = 3$, $N = 7$ for $k=30$, $\rho =10$. 
As can be seen, with the increase in the number of sensors that can communicate with FC, the MSE improves. This decrease in MSE is reasonable as the FC can access more information from the sensors. In Fig.\ref{fig_mse_N}, the MSE performance in terms of the number of sensors $N$ is plotted for $p=3$, $M=3$, $k=30$ and $\rho = 10$. 
As we can see, the MSE performance improves as $N$ increases. This behavior is also expected as now more sensors are collaborating in order to send information to the FC. 
In Fig.\ref{fig_mse_p}, the normalized MSE performance as a function of the signal dimension, $p$, is presented for $M=3$, $N=7$, $k=30$ and $\rho = 10$. The normalized MSE is defined as $\text{tr}(\mathbf{P}(k))/p$ for fair comparison.
It can be seen that with the increase in the parameter dimension, the MSE also increases as expected. This happens because the estimation problem becomes more and more difficult with the increase in signal dimension. 

\begin{figure}[!b]
\centering
\includegraphics[width=3in, height=2.1 in]{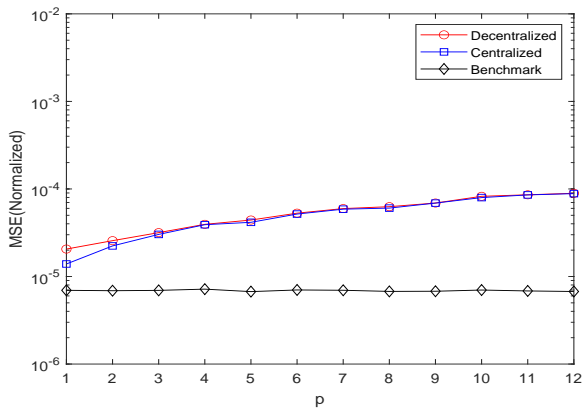}
\caption{Normalized MSE performance as a function of parameter dimension $p$.}
\label{fig_mse_p}
\end{figure}

\begin{figure}[!t]
\centering
\includegraphics[width=3.2in, height=2.2 in]{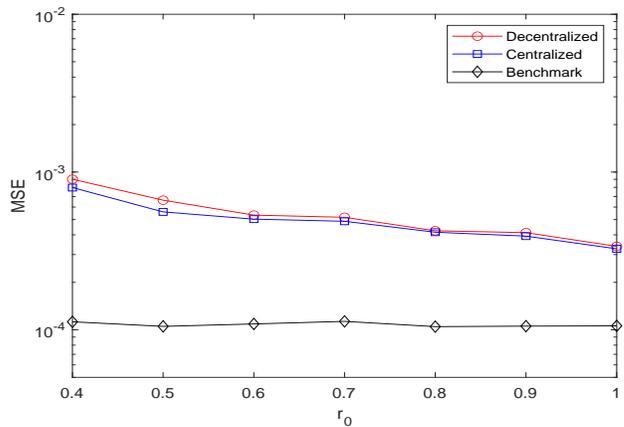}
\caption{MSE performance as the function of collaboration radius $r_0$.}
\label{fig_mse_w_topology}
\end{figure}

Finally, in Fig.\ref{fig_mse_w_topology}, a sparse sensor network topology is considered. The spatial placement and network structure is modeled as a random geometric graph, \cite{kar2013linear,freris2010fundamentals}, where sensors are assumed to be placed in a unit square meter area. All the sensors are only allowed to communicate with their neighbors when the distance between the neighbors is no more than $r_0$ meters. Here, $r_0$ denotes the collaboration radius. When 
the collaboration radius $r_0 = 1$, the network will be fully connected. As can be seen, the MSE performance improves with the increase of collaboration radius, $r_0$. 
Since more sensors are allowed to collaborate, thereby, more information will be transmitted to the FC for estimation.

\section{Estimation of Time-varying parameters}
\label{Sec: Time-varying parameters estimation}
In this section, distributed sequential estimation for tracking a time-varying parameter vector is considered. We assume that the state of the target follows
\begin{equation}
    \mathbf{x}(k) = \mathbf{A}_s(k-1)\mathbf{x}(k-1) + \mathbf{n}_s(k-1)
\end{equation}
where $\mathbf{A}_s(k-1)$ is the known state transition matrix and $\mathbf{n}_s(k-1)$ is the state noise with zero mean and covariance $\mathbf{R}_{\mathbf{n}_s}=\mathbb{E}[\mathbf{n}_s(k)\mathbf{n}_s^T(k)]$.

Similar to the derivation of sequential MMSE estimation of $\mathbf{x}$ in Sec.II.C, the prediction of $\mathbf{x}(k|k-1)$ and $\mathbf{P}(k|k-1)$ are given by
\begin{equation}
   \hat{\mathbf{x}}(k|k-1) = \mathbf{A}_s(k-1)\hat{\mathbf{x}}(k-1|k-1)
\end{equation}
\begin{equation}
    \mathbf{P}(k|k-1) =  \mathbf{A}_s(k-1)\hat{\mathbf{P}}(k-1|k-1)\mathbf{A}_s^T(k-1)+\mathbf{R}_{\mathbf{n}_s}
\end{equation}
The corresponding state update steps follow
\begin{align}
    &    \hat{\mathbf{x}}(k|k)  =\hat{\mathbf{x}}(k|k-1) +\mathbf{T}(k) \{\mathbf{q}(k)-\mathbb{E}[\mathbf{q}(k)|\Theta(k-1)]\} \\
       &    \mathbf{P}(k|k)  = [\mathbf{I} - \mathbf{T}(k)\mathbf{D}(k)]\mathbf{P}(k-1|k-1)[\mathbf{I} - \mathbf{T}(k)\mathbf{D}(k)]^T \nonumber\\
        & \qquad \qquad + \mathbf{T}(k)\mathbf{R}_n(k)\mathbf{T}^T(k)
\end{align}
where {$\mathbb{E}[\mathbf{q}(k)|\Theta(k-1)] = \mathbf{D}(k)\hat{\mathbf{x}}(k|k-1)$,} $\mathbf{D}(k) = \mathbf{G}(k)\mathbf{F}(k)[\mathbf{W}(k)\otimes \mathbf{I}_L]\mathbf{H}(k)$.
Then the estimation error covariance can expressed recursively as
\begin{equation}
\begin{split}
        \mathbf{P}(k|k)  = &\widetilde{\mathbf{D}}(k)\mathbf{A}_s(k-1)\hat{\mathbf{P}}(k-1|k-1)\mathbf{A}_s^T(k-1)\widetilde{\mathbf{D}}^T(k)\\
        &+ \widetilde{\mathbf{D}}(k)\mathbf{R}_{\mathbf{n}_s}\widetilde{\mathbf{D}}^T(k) +\mathbf{T}(k)\mathbf{R}_n(k)\mathbf{T}^T(k)
\end{split}
\end{equation}
where $\widetilde{\mathbf{D}}(k) = \mathbf{I} - \mathbf{T}(k)\mathbf{D}(k)$.

We can also apply the proposed algorithm for this case. The method in Sec.III could be used to obtain the optimal collaboration and compression strategies.
Thus, one can obtain the solution for the time-varying parameters estimation problem following a similar process.

\section{Conclusion}
\label{Sec: Conclusion}
This paper focused on the problem of distributed sequential estimation of a random parameter vector in a resource constrained WSN. A communication efficient collaboration-compression framework was proposed for solving this problem. Specifically, the local sensors first collaborate (via a collaboration matrix) with each other and then a subset of sensors transmit the observations obtained after collaboration to the FC. Importantly, before transmission to the FC the observations at the local sensors are compressed to reduce the communication costs further. Near-optimal collaboration and linear compression strategies are designed jointly for the goal of recursively minimizing the mean square error. 
Further, we show that even though the work focused on estimating random vectors, the proposed methods can be used for estimating time-varying random vector parameters with a known transition matrix. 

Future extensions of this work include, power allocation for individual sensors while designing efficient collaboration-compression strategies in the WSNs. Also, the problem of optimal topology design under the proposed framework is an interesting research direction. Moreover, quantization based schemes for collaboration and compression will also be an interesting future research direction. 
\appendices
\section{Constraint function}
\label{Appendix: Constraint function}
 Let $\textbf{w}\in \mathbb{R}^U$ be the nonzero elements of $\text{vec}(\textbf{W})$ where $\textbf{W}\in \mathbb{R}^{M\times N}$. Apparently, each element in $\textbf{w}$ uniquely correspond to $\textbf{W}$, which can be noted as $w_u = W_{m_u n_u}$ where $m_u\in [1,M]$, $n_u \in [1,N]$, $u\in [1,U]$.
 
 Given $\textbf{a}\in \mathbb{R}^{ML}$, we can get
 \begin{equation}
     \textbf{a}^T (\textbf{W}\otimes \textbf{I}_L) = [\textbf{a}^T[\textbf{W}\otimes \textbf{I}_L]_{.1}, \dots, \textbf{a}^T[\textbf{W}\otimes \textbf{I}_L]_{.NL} ]
 \end{equation}
where $\textbf{a}^T[\textbf{W}\otimes \textbf{I}_L]_{.j}$ represents the $j$th column of $\textbf{a}^T[\textbf{W}\otimes \textbf{I}_L] $. 

Meanwhile, given $\textbf{A}\in \mathbb{R}^{U\times NL}$, 
\begin{equation}
    \textbf{w}^T \textbf{A} = [\textbf{w}^T \textbf{A}_{.1},\dots,\textbf{w}^T \textbf{A}_{.NL}]
\end{equation}
where $\textbf{A}_{.j}$ is the $j$th column of $\textbf{A}$ and $\textbf{w}^T \textbf{A}_{.j} = \sum_{u=1}^{U}{W_{m_u n_u}A_{uj}}$ for $j\in [1,NL]$. 

Consider the $j$th entry of $\textbf{w}^T \textbf{A}$, we can obtain
\begin{equation}
    \begin{split}
        [\textbf{w}^T \textbf{A}]_{j} &= \sum_{u=1}^{U}{W_{m_u n_u}A_{uj}}\\
        &= \sum_{u=1,n_u=\lfloor{\frac{j-1}{L}}\rfloor +1 }^U {a_{L(m_u - 1)+ j-L\lfloor{\frac{j-1}{L}}\rfloor)} W_{m_u j}}\\
        &= \sum_{m_u = 1}^M {a_{L(m_u - 1)+ j-L\lfloor{\frac{j-1}{L}}\rfloor }W_{m_u j} } \\
        &= [\textbf{a}^T(\textbf{W}\otimes \textbf{I}_L)]_{j}
    \end{split}
\end{equation}
where we have made use of the fact that 
\begin{equation}
    A_{uj}=\left\{\begin{matrix}a_{L(m_u - 1)+ j-L\lfloor{\frac{j-1}{L}}\rfloor) },& n_u=\lfloor{\frac{j-1}{L}}\rfloor +1 \\ 
    0,& \text{otherwise} 
\end{matrix}\right.
\end{equation}

Next, we will show the provement of property (\ref{map_property2}). Given $\textbf{B}\in \mathbb{R}^{P\times ML}$ , $\textbf{C}\in \mathbb{R}^{NL\times NL}$ and $\textbf{D}\in \mathbb{R}^{ML\times P}$, we can obtain
that 
\begin{equation}
    \text{tr}[\textbf{B}(\textbf{W}\otimes \textbf{I}_L)\textbf{C}(\textbf{W}\otimes \textbf{I}_L)^T\textbf{D}] = \sum_{i=1}^P{\textbf{e}_i^T}\textbf{B}(\textbf{W}\otimes \textbf{I}_L)\textbf{C}(\textbf{W}\otimes \textbf{I}_L)^T\textbf{D}\textbf{e}_i
\end{equation}
where $\mathbf{B} = \begin{bmatrix}
\mathbf{b}_1 & \cdots &\mathbf{b}_P
\end{bmatrix}^T$ , $\mathbf{D} = \begin{bmatrix}\mathbf{d}_1&\cdots&\mathbf{d}_P\end{bmatrix}$ and $\mathbf{e}_i$ is the base vector whose entries are zero except that the $i$th entry is 1.

By using the property (\ref{map_property1}), we have
\begin{equation}
    \textbf{e}_i^T\textbf{B}(\textbf{W}\otimes \textbf{I}_L) = \textbf{b}_i^T(\textbf{W}\otimes \textbf{I}_L) = \textbf{w}^T\widetilde{\textbf{B}}_i
\end{equation}
\begin{equation}
    (\textbf{W}\otimes \textbf{I}_L)^T\textbf{D}\textbf{e}_i = (\textbf{W}\otimes \textbf{I}_L)^T\textbf{d}_i = \widetilde{\mathbf{D}}_i^T\textbf{w}
\end{equation}
where 
\begin{equation}
        [\widetilde{\textbf{B}}_i]_{uj}=\left\{\begin{matrix}
 [\textbf{b}_i]_{L(m_u - 1)+ j-L\lfloor{\frac{j-1}{L}}\rfloor },& n_u=\lfloor{\frac{j-1}{L}}\rfloor +1 \\ 
 0,& \text{otherwise} 
\end{matrix}\right.
\end{equation}
and
\begin{equation}
        [\widetilde{\textbf{D}}_i]_{uj}=\left\{\begin{matrix}
 [\textbf{d}_i]_{L(m_u - 1)+ j-L\lfloor{\frac{j-1}{L}}\rfloor },& n_u=\lfloor{\frac{j-1}{L}}\rfloor +1 \\ 
 0,& \text{otherwise} 
\end{matrix}\right.
\end{equation}
for $u\in [1,U]$ and $j\in [1,NL]$

Therefore, 
\begin{equation}
    \text{tr}[\textbf{B}(\mathbf{W}\otimes \mathbf{I}_L)\mathbf{C}(\mathbf{W}\otimes \mathbf{I}_L)^T\mathbf{D}] = \sum_{i=1}^{r_B} \mathbf{w}^T\widetilde{\mathbf{B}}_i \mathbf{C} \widetilde{\mathbf{D}}_i^T\mathbf{w} = \mathbf{w}^T \mathbf{E}\mathbf{w}
\end{equation}
where $\mathbf{E} = \sum_{i=1}\widetilde{\mathbf{B}}_i \mathbf{C} \widetilde{\mathbf{D}}_i^T$, $r_B$ represents the number of rows of matrix $\mathbf{B}$, then we can obtain property (\ref{map_property2}). Similarly, 
\begin{equation}
     \text{tr}[\mathbf{B}(\mathbf{W}\otimes \mathbf{I})\textbf{C}] = \sum_{i=1}\textbf{e}_i^T\textbf{B}(\textbf{W}\otimes \textbf{I})\textbf{C}\textbf{e}_i = \textbf{w}^T(\sum_{i=1}\widetilde{\mathbf{B}}_i \textbf{c}_i) = \textbf{w}^T \widetilde{\mathbf{c}}_i
\end{equation}
where $\textbf{c}_i$ is the $i$th column of $\textbf{C}$.

\section{Coefficient matrix}
\label{Appendix: Coefficient matrix}
Based on proposition 1, the problem in (\ref{Target_function}) can be expressed as the quadratic function of $\textbf{w}$. Recall the expression of error corvariance is given by 
\begin{equation}\label{error_covariance_expression}
    \begin{split}
       & \text{tr}[\mathbf{P}(k)]\\  =&\text{tr}\{ [\mathbf{I} - \mathbf{T}(k)\mathbf{D}(k)]\mathbf{P}(k-1)[\mathbf{I} - \mathbf{T}(k)\mathbf{D}(k)]^T \\
        &+ \mathbf{T}(k)\mathbf{R}_n(k)\mathbf{T}(k)^T\}\\
        =& \text{tr}[\mathbf{P}(k-1)]+\mathbf{tr}[\mathbf{T}(k)\mathbf{R}_{\boldsymbol{\epsilon}} \mathbf{T}^T(k)]\\
        &+ \text{tr}\{\mathbf{T}(k)\mathbf{G}\mathbf{F}(k)[\mathbf{W}(k)\otimes \mathbf{I}_L]\mathbf{H}(k)\mathbf{P}(k-1)\mathbf{H}^T(k)\\&\qquad [\mathbf{W}(k)\otimes\mathbf{I}_L]^T\mathbf{F}^T(k)\mathbf{G}^T\mathbf{T}^T(k)\}\\
        &-\text{tr}\{\mathbf{T}(k)\mathbf{G}\mathbf{F}(k)[\mathbf{W}(k)\otimes \mathbf{I}_L]\mathbf{H}(k)\mathbf{P}(k-1) \}\\
        &-\text{tr}\{\mathbf{P}(k-1)\mathbf{H}^T(k)[\mathbf{W}(k)\otimes\mathbf{I}_L]^T\mathbf{F}^T(k)\mathbf{G}^T\mathbf{T}^T(k) \}\\
        &+ \text{tr}\{\mathbf{T}(k)\mathbf{G}\mathbf{F}(k)[\mathbf{W}(k)\otimes\mathbf{I}_L]\mathbf{R}_{\mathbf{v}}[\mathbf{W}(k)\otimes\mathbf{I}_L]^T\mathbf{F}^T(k)\\&\qquad \mathbf{G}^T\mathbf{T}^T(k)\}\\
        &+ \text{tr}[\mathbf{T}(k)\mathbf{G}\mathbf{F}(k) \mathbf{R}_{\alpha}\mathbf{F}^T(k)\mathbf{G}^T\mathbf{T}^T(k)]\\
        \end{split} 
\end{equation}

Let $\mathbf{B}_0^{(0)}=\mathbf{T}(k)\mathbf{G}\mathbf{F}(k) $, $\mathbf{C}_0^{(0)}=\mathbf{H}(k)\mathbf{P}(k-1)\mathbf{H}^T(k)+\mathbf{R}_{\mathbf{v}}$, $\mathbf{D}_0^{(0)}=\mathbf{F}^T(k)\mathbf{G}^T\mathbf{T}^T(k)$ and $\mathbf{C}_1^{(0)}=\mathbf{H}(k)\mathbf{P}(k-1)$, according to (\ref{map_property2}) and (\ref{map_property3}), we can get
\begin{equation}\label{expression_Omega_0}
        \text{tr}[\mathbf{B}_0^{(0)}(\mathbf{W}\otimes \mathbf{I}_L)\mathbf{C}_0^{(0)}(\textbf{W}\otimes \mathbf{I}_L)^T\mathbf{D}_0^{(0)}] = \mathbf{w}^T \boldsymbol{\Omega}^{(0)} \mathbf{w}
\end{equation}
\begin{equation}
    \text{tr}[\mathbf{B}_0^{(0)}(\mathbf{W}\otimes \mathbf{I}_L)\mathbf{C}_1^{(0)}] = \mathbf{w}^T {\mathbf{d}}
\end{equation}
and the constant term $\eta_0$ is given by 
\begin{equation}
    \begin{split}
       \eta_0  =& \text{tr}[\mathbf{P}(k-1)]+\mathbf{tr}[\mathbf{T}(k)\mathbf{R}_{\boldsymbol{\epsilon}} \mathbf{T}^T(k)]\\&+ \text{tr}[\mathbf{T}(k)\mathbf{G}\mathbf{F}(k) \mathbf{R}_{\alpha}\mathbf{F}^T(k)\mathbf{G}^T\mathbf{T}^T(k)]\\
        \end{split} 
\end{equation}
Then, the target function in terms of $\mathbf{W}(k)$ can be represented as the function of $\mathbf{w}(k)$ as follows
\begin{equation}
    \begin{split}
        \text{tr}[\textbf{P}(k)] 
        =& \textbf{w}^T\boldsymbol{\Omega}^{(0)}\textbf{w} - 2\textbf{w}^T\textbf{d}+ \eta_0
        \end{split} 
\end{equation}

At the same time, the expected energy cost for sensor collaboration in (\ref{collaboration_constraint})
is given by
\begin{equation}\label{collaboration_constraint_appendix}
        \mathbb{E}[{C}_i^{(1)}(k)]  = \text{tr}(\mathbf{R}_{\mathbf{y}_i})\{\mathbf{e}_i^T[\mathbf{W}(k)\odot \widetilde{\mathbf{I}}]^T[\mathbf{W}(k)\odot \widetilde{\mathbf{I}}]\mathbf{e}_i\}
\end{equation}
Let $\widetilde{\mathbf{W}}(k) = \mathbf{W}(k)\odot \widetilde{\mathbf{I}}$, where $\widetilde{\mathbf{W}}(k)$ is same as $\mathbf{W}(k)$ except that the diagonal elements are set as 0. Then (\ref{collaboration_constraint_appendix})  can be reexpressed as:
\begin{equation}\label{E1_constraint_w}
    \begin{split}
         \mathbb{E}[{C}_i^{(1)}(k)]  =&  \text{tr}(\mathbf{R}_{\mathbf{y}_i})\text{tr}\{[\mathbf{W}(k)\odot \widetilde{\mathbf{I}}]\mathbf{e}_i\mathbf{e}_i^T[\mathbf{W}(k)\odot \widetilde{\mathbf{I}}]^T\}\\
       = &\text{tr}(\mathbf{R}_{\mathbf{y}_i}) \text{tr} [\widetilde{\mathbf{W}}(k) \mathbf{e}_i \mathbf{e}_i^T \widetilde{\textbf{W}}^T(k)]\\
    \overset{(a)}{=}  & \text{tr}(\mathbf{R}_{\mathbf{y}_i})\widetilde{\mathbf{w}}^T \mathbf{E}_i \widetilde{\mathbf{w}}
      \overset{(b)}{=} \text{tr}(\mathbf{R}_{\mathbf{y}_i})\mathbf{w}^T \mathbf{J}^T \mathbf{E}_i \mathbf{J} \mathbf{w}
    \end{split}
\end{equation}
where the transition of (a) can be obtained by 
\begin{equation}
     \text{tr}[\mathbf{B}_0^{(1)}(\widetilde{\mathbf{W}}\otimes \mathbf{I}_L)\mathbf{C}_0^{(1)}(\widetilde{\mathbf{W}}\otimes \mathbf{I}_L)^T\mathbf{D}_0^{(1)}] = \widetilde{\mathbf{w}}^T \boldsymbol{\Omega}^{(1)} \widetilde{\mathbf{w}}
\end{equation}
where $\mathbf{B}_0^{(1)} =\mathbf{D}_0^{(1)}= \mathbf{I}_M$, $\mathbf{C}_0^{(1)}=\mathbf{e}_i \mathbf{e}_i^T$,$L=1$ and
$s$ is the number of nonzero entries in $\widetilde{\mathbf{W}}(k)$, and $\widetilde{\mathbf{w}}\in \mathbb{R}^{s\times 1}$ is the vector consists of the nonzero element of $\widetilde{\mathbf{W}}(k)$. At the same time, $\mathbf{w}$ consists of the nonzero elements of $\mathbf{W}(k)$, which means $\widetilde{\mathbf{w}}$ can be linearly transformed into $\mathbf{w}$ as (b). And $\mathbf{J}$ is given by  
\begin{equation}
    [\textbf{J}]_{ij} = \left\{\begin{matrix}
     1,& \mathcal{L}(i)=j\\
     0,& \text{otherwise}
    \end{matrix}\right.
\end{equation}
where $\mathcal{L}=\{u | w_u = W_{m_u n_u},m_u \neq n_u\}$.

In order to make it clear, a simple example is provided as follows.
Assume the collaborative matrix $\mathbf{W}(k)$ and $\widetilde{\mathbf{W}}(k)$ are given by
\begin{equation}
    \mathbf{W}=\begin{bmatrix}
 w_1&  0& 0 &  w_5&  w_7&  0\\ 
 w_2&  w_3&  0&  0&  w_8&  w_9\\ 
 0&  0&  w_4&  w_6&  0&  0
\end{bmatrix}
\end{equation}
\begin{equation}
\widetilde{\mathbf{W}}=\begin{bmatrix}
 0&  0& 0 &  w_5&  w_7&  0  \\ 
 w_2&  0&  0&  0&  w_8&  w_9 \\ 
 0&  0&  0&  w_6&  0&  0
\end{bmatrix}
\end{equation}
then, the corresponding vectors are given by
\begin{equation}
    \mathbf{w}=\begin{bmatrix}
 w_1&  w_2& w_3& w_4& w_5& w_6& w_7&  w_8& w_9\
\end{bmatrix}^T
\end{equation}
\begin{equation}
    \widetilde{\mathbf{w}}=\begin{bmatrix}
  w_2& w_5& w_6& w_7&  w_8& w_9\
\end{bmatrix}^T
\end{equation}
and the corresponding transition matrix $\mathbf{J}$ is given by
\begin{equation}
    \mathbf{J}=\begin{bmatrix}
0 &  1&  0&  0&  0&  0&  0&  0& 0\\ 
0 &  0&  0&  0&  1&  0&  0&  0& 0\\ 
0 &  0&  0&  0&  0&  1&  0&  0& 0\\ 
0 &  0&  0&  0&  0&  0&  1&  0& 0\\ 
0 &  0&  0&  0&  0&  0&  0&  1& 0\\ 
0 &  0&  0& 0 &  0&  0&  0&  0& 1
\end{bmatrix}
\end{equation}

Thus, the coefficient matrix $\boldsymbol{\Omega}^{(1)}$ is given by 
\begin{equation}
    \boldsymbol{\Omega}^{(1)} = \text{tr}(\mathbf{R}_{\mathbf{y}_i})\mathbf{J}^T \mathbf{E}_i \mathbf{J} 
\end{equation}

The compression cost is given by 
\begin{equation}
    \begin{split}
        &\mathbb{E}[{C}_i^{(2)}(k)] \\
        =& \mathbf{f}_i^T(\mathbf{e}_i^T\otimes \mathbf{I}_L)[\mathbf{W}(k)\otimes \mathbf{I}_L] \mathbf{R}_{Y}[\mathbf{W}(k)\otimes \mathbf{I}_L]^T (\mathbf{e}_i\otimes \mathbf{I}_L)\mathbf{f}_i\\
        &+ \mathbf{f}_i^T(\mathbf{e}_i^T\otimes \mathbf{I}_L)\mathbf{R}_{\boldsymbol{\alpha}}(\mathbf{e}_i\otimes \mathbf{I}_L)\mathbf{f}_i\\
        =& \text{tr}\{(\mathbf{e}_i\otimes \mathbf{I}_L)\mathbf{f}_i \mathbf{f}_i^T(\mathbf{e}_i^T\otimes \mathbf{I}_L)[\mathbf{W}(k)\otimes \mathbf{I}_L] \mathbf{R}_{Y}[\mathbf{W}(k)\otimes \mathbf{I}_L]^T\}\\
        &+ \mathbf{f}_i^T(\mathbf{e}_i^T\otimes \mathbf{I}_L)\mathbf{R}_{\boldsymbol{\alpha}}(\mathbf{e}_i\otimes \mathbf{I}_L)\mathbf{f}_i\\
    \end{split}
\end{equation}

Let $\mathbf{B}_0^{(2)} = (\mathbf{e}_i\otimes \mathbf{I}_L)\mathbf{f}_i \mathbf{f}_i^T(\mathbf{e}_i^T\otimes \mathbf{I}_L)$, $\mathbf{C}_0^{(2)} =\mathbf{R}_\mathbf{y} $, and $\mathbf{D}_0^{(2)} = \mathbf{I}_{ML}$, then $\boldsymbol{\Omega^{(2)}}$ and $\eta_1$ can be given by
\begin{equation}\label{E2_constraint_w}
        \text{tr}[\mathbf{B}_0^{(2)}(\mathbf{W}\otimes \mathbf{I}_L)\mathbf{C}_0^{(2)}(\textbf{W}\otimes \mathbf{I}_L)^T\mathbf{D}_0^{(2)}] = \mathbf{w}^T \boldsymbol{\Omega}^{(2)} \mathbf{w}
\end{equation}
\begin{equation}
    \eta_1 = \mathbf{f}_i^T(\mathbf{e}_i^T\otimes \mathbf{I}_L)\mathbf{R}_{\boldsymbol{\alpha}}(\mathbf{e}_i\otimes \mathbf{I}_L)\mathbf{f}_i
\end{equation}
Then, all the coefficient is problem (\ref{Solution of w}) are provided.

Consider the positive definiteness of the coefficient matrix. Based on (\ref{expression_Omega_0}) and (\ref{expression_E}), for any nonzero vector $\mathbf{r}$
\begin{equation}\label{Prove_positive_omega0}
\begin{split}
    &\mathbf{r}^T\boldsymbol{\Omega}^{(0)}\mathbf{r} = \mathbf{r}^T \sum_{i=1}^{r_B}[\widetilde{\mathbf{B}}_0^{(0)}]_i \mathbf{C}_0^{(0)} [\widetilde{\mathbf{B}}_0^{(0)}]_i^T  \mathbf{r}\\
    =& \mathbf{r}^T \sum_{i=1}^{r_B}[\widetilde{\mathbf{B}}_0^{(0)}]_i [\mathbf{H}(k)\mathbf{P}(k-1)\mathbf{H}^T(k)+\mathbf{R}_{\mathbf{v}}] [\widetilde{\mathbf{B}}_0^{(0)}]_i^T  \mathbf{r}\\ 
    =&\mathbb{E}\{\mathbf{r}^T \sum_{i=1}^{r_B}[\widetilde{\mathbf{B}}_0^{(0)}]_i \mathbf{H}(k)\mathbf{e}(k-1)\mathbf{e}^T(k-1)\mathbf{H}^T(k) [\widetilde{\mathbf{B}}_0^{(0)}]_i^T  \mathbf{r}\}\\
    &+\mathbb{E}\{\mathbf{r}^T \sum_{i=1}^{r_B}[\widetilde{\mathbf{B}}_0^{(0)}]_i \mathbf{v}(k)\mathbf{v}^T(k) [\widetilde{\mathbf{B}}_0^{(0)}]_i^T  \mathbf{r}\}\\
    =&\sum_{i=1}^{r_B}\mathbb{E}\big\{ \{\mathbf{r}^T [\widetilde{\mathbf{B}}_0^{(0)}]_i \mathbf{H}(k)\mathbf{e}(k-1)\}^2\big \}\\
    &+\sum_{i=1}^{r_B}\mathbb{E}\big \{ \{\mathbf{r}^T [\widetilde{\mathbf{B}}_0^{(0)}]_i \mathbf{v}(k)\}^2\big \}>  0\\
\end{split}
\end{equation}
which means that $\boldsymbol{\Omega}^{(0)}$ is a positive definite matrix.

Then, consider $\boldsymbol{\Omega}^{(1)}$ and $\boldsymbol{\Omega}^{(2)}$. Observe (\ref{E1_constraint_w}) and (\ref{E2_constraint_w}), for any nonzero vector $\mathbf{w}$, there always exists a corresponding $\mathbf{W}(k)$ that can make the equalities hold. More specifically, given network topology $\mathbf{A}$, any vector $\mathbf{w}$ can be mapped into $\mathbf{W}(k)$ through (\ref{map_w_W}).
Meanwhile, from (\ref{Collaboration_cost}) and (\ref{compression_cost}), it is evident that 
\begin{equation}
    \mathbb{E}[{C}_i^{(1)}(k)] \geq 0,\quad \mathbb{E}[{C}_i^{(2)}(k)]\geq 0
\end{equation}
always hold with equality when is no communication.

That is to say, for any nonzero vector $\mathbf{w}$, 
\begin{equation}
    \mathbf{w}^T\boldsymbol{\Omega}_i^{(1)}\mathbf{w} >0 ~~
\text{and}~~
    \mathbf{w}^T\boldsymbol{\Omega}_i^{(2)}\mathbf{w} >0 
\end{equation}
always hold, which implies that $\boldsymbol{\Omega}^{(1)}$ and $\boldsymbol{\Omega}^{(2)}$ are also positive definite matrices.

Consider the proof of positive semi-definiteness of $\boldsymbol{\Omega}^{(3)}$. For any nonzero vector $\mathbf{r}$,
\begin{equation}
\begin{split}
      \mathbf{r}^T\boldsymbol{\Omega}^{(3)}\mathbf{r}=&\pi_{ii}\mathbf{r}^T\big[\mathbf{W}_i(k)\mathbf{H}(k)\mathbf{P}(k-1)\mathbf{H}^T(k)\mathbf{W}_i^T(k)\\
      & + \mathbf{W}_i(k)\mathbf{R}_{\mathbf{v}}\mathbf{W}_i^T(k)+\mathbf{R}_{\alpha_i}\big]\mathbf{r} 
\end{split}
\end{equation}
where $ \pi_{ii} =\|\mathbf{T}(k)\mathbf{g}_i(k)\|_2^2 \geq 0
$, and 
\begin{equation}
\begin{split}
      &\mathbf{r}^T\big[\mathbf{W}_i(k)\mathbf{H}(k)\mathbf{P}(k-1)\mathbf{H}^T(k)\mathbf{W}_i^T(k) \\&+ \mathbf{W}_i(k)\mathbf{R}_{\mathbf{v}}\mathbf{W}_i^T(k)+\mathbf{R}_{\alpha_i}\big]\mathbf{r}  \\
      =&\mathbb{E}\{[\mathbf{r}^T\mathbf{W}_i(k)\mathbf{H}(k)\mathbf{e}(k-1)]^2\}+ \mathbb{E}\{[\mathbf{r}^T\mathbf{W}_i(k)\mathbf{v}(k)]^2\} \\&+ \mathbb{E}\{[\mathbf{r}^T\mathbf{\alpha_i}(k)]^2\}  > 0
\end{split}
\end{equation}
therefore, $\mathbf{r}^T\boldsymbol{\Omega}^{(3)}\mathbf{r}\geq 0$ holds for any nonzero vector $\mathbf{r}$. Therefore, $\boldsymbol{\Omega}^{(3)}$ is a positive semi-definite matrix.
\section{Prove of MSE convergence}
\label{Appendix: Prove of MSE convergence}
Before we show the convergence of the proposed algorithm, a lemma will be used in our provement is provided \cite{baksalary1992inequality}. 
\begin{lem}
    For any matrix $\mathbf{A}$  and $\mathbf{B}$ which are symmetric and non-negative definite, have the following property
    \begin{equation}
        \lambda_{\text {min}}(\textbf{A}) \text{tr}(\textbf{B}) \leq \text{tr}(\textbf{AB}) \leq \lambda_{\text{max}}(\mathbf{A})\text{tr}(\textbf{B})
    \end{equation}
    where $\lambda_{\text{min}}$ and $\lambda_{\text{max}}$ are the smallest and biggest eigenvalue of matrix $\textbf{A}$ respectively.
\end{lem}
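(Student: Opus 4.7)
The plan is to reduce both bounds to a single diagonal computation via the spectral decomposition of $\mathbf{A}$. Since $\mathbf{A}$ is real and symmetric, the spectral theorem yields $\mathbf{A} = \mathbf{U}\boldsymbol{\Lambda}\mathbf{U}^T$ with $\mathbf{U}$ orthogonal and $\boldsymbol{\Lambda} = \text{diag}(\lambda_1,\ldots,\lambda_n)$, where each eigenvalue satisfies $\lambda_{\min}(\mathbf{A}) \leq \lambda_i \leq \lambda_{\max}(\mathbf{A})$ and is moreover non-negative by the hypothesis that $\mathbf{A}$ is non-negative definite. Using the cyclic property of trace, I would rewrite
\begin{equation*}
\text{tr}(\mathbf{A}\mathbf{B}) \;=\; \text{tr}(\mathbf{U}\boldsymbol{\Lambda}\mathbf{U}^T\mathbf{B}) \;=\; \text{tr}(\boldsymbol{\Lambda}\,\mathbf{U}^T\mathbf{B}\mathbf{U}) \;=\; \sum_{i=1}^{n}\lambda_i\,C_{ii},
\end{equation*}
where $\mathbf{C} := \mathbf{U}^T\mathbf{B}\mathbf{U}$.

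The crux of the argument lies in two elementary facts about $\mathbf{C}$. First, $\mathbf{C}$ is non-negative definite because orthogonal conjugation preserves non-negative definiteness; in particular, each diagonal entry $C_{ii} = \mathbf{e}_i^T \mathbf{C}\,\mathbf{e}_i \geq 0$. Second, trace is invariant under orthogonal similarity, so $\sum_i C_{ii} = \text{tr}(\mathbf{C}) = \text{tr}(\mathbf{B})$. With non-negative weights $C_{ii}$, the scalar sandwich
\begin{equation*}
\lambda_{\min}(\mathbf{A})\sum_{i=1}^n C_{ii} \;\leq\; \sum_{i=1}^n \lambda_i C_{ii} \;\leq\; \lambda_{\max}(\mathbf{A})\sum_{i=1}^n C_{ii}
\end{equation*}
is immediate from the definition of $\lambda_{\min}$ and $\lambda_{\max}$, and substituting $\sum_i C_{ii} = \text{tr}(\mathbf{B})$ produces the claimed two-sided inequality.

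I do not anticipate a substantive obstacle; the argument is essentially two lines after diagonalization. The one point worth flagging is the essential role of the non-negative definiteness of $\mathbf{B}$: it is precisely what guarantees $C_{ii}\geq 0$, which is what permits the weighted-sum bound. Without this hypothesis the conclusion fails, as one can see by taking a diagonal $\mathbf{B}$ with both positive and negative entries, in which case $\sum_i \lambda_i C_{ii}$ can leave the interval $[\lambda_{\min}(\mathbf{A})\text{tr}(\mathbf{B}),\,\lambda_{\max}(\mathbf{A})\text{tr}(\mathbf{B})]$. The non-negative definiteness of $\mathbf{A}$ plays only the mild role of ensuring $\lambda_{\min}(\mathbf{A})\geq 0$, which is consistent with, but not actually required for, the sandwich bound itself; the proof goes through for any real symmetric $\mathbf{A}$ paired with non-negative definite $\mathbf{B}$.
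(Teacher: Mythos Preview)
Your proof is correct and complete; the spectral decomposition followed by the weighted-sum sandwich with non-negative diagonal weights $C_{ii}$ is exactly the standard argument. Note that the paper itself does not supply a proof of this lemma but simply cites it from \cite{baksalary1992inequality}, so there is no in-paper argument to compare against; your self-contained derivation is more than adequate and your observation that the symmetry/non-negativity of $\mathbf{A}$ is inessential (only that of $\mathbf{B}$ matters) is a nice sharpening.
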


In the centralized case, the decoder can be expressed in closed form as follows.
\begin{equation}
\begin{split}
    \mathbf{T}(k)=& \mathbf{P}(k-1)\mathbf{H}^T[\mathbf{W}(k)\otimes \mathbf{I}]^T \mathbf{F}^T(k) \mathbf{G}^T \\ &[\mathbf{D}(k)  \mathbf{P}(k-1)
      \mathbf{D}^T(k) 
      +\mathbf{R}_n(k) ]^{-1}\\
\end{split}
\end{equation}

Denote $\textbf{R}_a = [\mathbf{D}(k)  \mathbf{P}(k-1)\mathbf{D}^T(k)      +\mathbf{R}_n(k) ]^{-1}$, then $\textbf{T}(k)$ can be compactly expressed as
\begin{equation}
    \mathbf{T}= \mathbf{P}(k-1)\mathbf{H}^T[\mathbf{W}(k)\otimes \mathbf{I}]^T \mathbf{F}^T \mathbf{G}^T \mathbf{R}_a
\end{equation}
     
At the same time, 
\begin{equation}
    \mathbf{P}(k) = \mathbf{P}(k-1) - \mathbf{T}(k) \mathbf{G} \mathbf{F}(k) [\mathbf{W}(k)\otimes \mathbf{I}] \mathbf{H} \mathbf{P}(k-1)
\end{equation}

then 
\begin{equation}\label{diff_of_mse}
\begin{split}
        &  \Phi(k-1) - \Phi(k) \\
       =&  \text{tr}\{\mathbf{P}(k-1)\mathbf{H}^T[\mathbf{W}(k)\otimes \mathbf{I}]^T \mathbf{F}^T \mathbf{G}^T \mathbf{R}_a \mathbf{G} \mathbf{F} [\mathbf{W}\otimes \mathbf{I}] \mathbf{H} \mathbf{P}(k-1) \}  \\
       \geq & \lambda_{\text{min}}[\textbf{P}^2(k-1)] \text{tr}\{\mathbf{H}^T[\textbf{W}(k)\otimes \mathbf{I}]^T \mathbf{F}^T \mathbf{G}^T \mathbf{R}_a \mathbf{G} \mathbf{F} [\mathbf{W}\otimes \mathbf{I}] \mathbf{H}\}\\
       \geq & \lambda_{\text{min}}^2[\textbf{P}(k-1)] \lambda_{\text{min}}(\mathbf{R}_a) \text{tr}[\mathbf{D}(k) \textbf{D}^T(k)]\\
       = & \lambda_{\text{min}}^2[\mathbf{P}(k-1)] \lambda_{\text{min}}(\mathbf{R}_a) \|\mathbf{D} \|_F^2
\end{split}
\end{equation}

From (\ref{diff_of_mse}), it can be seen that MSE will strictly decrease with the update of $\mathbf{W}(k)$, $\mathbf{F}(k)$ and $\mathbf{T}(k)$ as long as the designed collaboration matrix $\mathbf{W}(k)$ and compression matrix $\mathbf{F}(k)$ could satisfy $\mathbf{D}(k)\neq \mathbf{0}$ and   $\lambda_{\text{min}}(\textbf{R}_a) > 0$. 

Notice that $\textbf{R}_a^{-1}$ is symmetric and positive definite, which means that 
\begin{equation}
    \lambda(\textbf{R}_a) = \frac{1}{ \lambda(\textbf{R}_a^{-1})} >0
\end{equation}

Therefore,  we can obtain 
\begin{equation}
    \Phi(k-1) - \Phi(k) > 0.
\end{equation}


\ifCLASSOPTIONcaptionsoff
  \newpage
\fi



%

\bibliographystyle{IEEEtran}
\bibliography{IEEEabrv,references}
\end{document}